\documentclass[acmsmall,screen]{acmart}\settopmatter{}

\setcopyright{rightsretained}
\acmPrice{}
\acmDOI{10.1145/3276497}
\acmYear{2018}
\copyrightyear{2018}
\acmJournal{PACMPL}
\acmVolume{2}
\acmNumber{OOPSLA}
\acmArticle{127}
\acmMonth{11}

\bibliographystyle{ACM-Reference-Format}
\citestyle{acmauthoryear}   


\usepackage{booktabs}   
\usepackage{subcaption} 

\usepackage{enumitem}
\setlist
  { leftmargin = 5.5mm
  , itemsep = 3pt
  , topsep = 4pt
  }

\usepackage{mathpartir}
\usepackage{fancyvrb}
\usepackage{wrapfig}
\usepackage{cleveref}
\usepackage{fouridx}
\usepackage{amsmath}



\newcommand{\mynote}[3]{\textcolor{#3}{\textsf{{#2}}}}

\newcommand{\rkc}[1]{\mynote{rkc}{#1}{blue}}
\def\parahead#1{\paragraph{\textbf{#1}}}


\newcommand{\ie}{{{i.e.}}}
\newcommand{\eg}{{{e.g.}}}
\newcommand{\etc}{{{etc.}}}
\newcommand{\cf}{{{cf.}}}

\newcommand{\web}{web}
\newcommand{\Web}{Web}
\newcommand{\key}[1]{#1}




\newcommand{\littleLeo}{\ensuremath{\textsc{LittleLeo}}}
\newcommand{\leo}{\ensuremath{\textsc{Leo}}}
\newcommand{\sns}{\ensuremath{\textsc{Sketch-n-Sketch}}}
\newcommand{\snsLeo}{\ensuremath{\textsc{Sketch-n-Sketch}}}

\newcommand{\snsVersion}{\textsc{v0.7.1}}

\newcommand{\suppMaterials}{Supplementary Appendices}

\newcommand{\codeSize}{\small}
\newcommand{\codeSizeInFigure}{\codeSize}
\newcommand{\codeSizeInText}
  {\normalsize}

\newcommand{\toolName}[1]
  {#1}
\newcommand{\valuePropToolName}[1]
  {\toolName{Update Program}}
\newcommand{\browserToolName}[1]
  {#1}


\newcommand{\set}[1]{\{{#1}\}}


\newcommand{\sep}{\hspace{0.06in}}
\newcommand{\sepPremise}{\hspace{0.20in}}
\newcommand{\hsepRule}{\hspace{0.20in}}
\newcommand{\vsepRuleHeight}{0.12in}
\newcommand{\vsepRule}{\vspace{\vsepRuleHeight}}
\newcommand{\miniSepOne}{\hspace{0.01in}}
\newcommand{\miniSepTwo}{\hspace{0.02in}}
\newcommand{\miniSepThree}{\hspace{0.03in}}


\newcommand{\figSyntaxLineBreak}{\\[1pt]}
\newcommand{\figSyntaxSpaceNextCategory}{\\[1pt]}

\newcommand{\figSyntaxSpaceItem}{\sep\mid\sep}

\newcommand{\figSyntaxEnd}{\end{array}$}

\newcommand{\figSyntaxBegin}{$\begin{array}{rcl}}
\newcommand{\figSyntaxRowLabel}[2]{{\textbf{#1}}\ {#2}&\miniSepTwo{::=}\miniSepTwo&}
\newcommand{\figSyntaxRow}{&\miniSepOne\mid\miniSepOne&}

\newcommand{\relDescription}[1]{\ensuremath{\textrm{\textbf{#1}}}}
\newcommand{\judgementHeadCenter}[2]
  {\ensuremath{\fbox{\relDescription{#1}\hspace{0.15in}#2}}}

\newcommand{\ruleName}[1]{\mbox{\textsc{\begin{normalsize}#1\end{normalsize}}}}
\newcommand{\ruleNameFig}[1]{\textsc{\begin{scriptsize}[#1]\end{scriptsize}}}


\newcommand{\ttlcurly}{\ensuremath{\texttt{\char`\{}}\hspace{0.00in}}
\newcommand{\ttrcurly}{\ensuremath{\hspace{0.00in}\texttt{\char`\}}}}
\newcommand{\ttlparen}{\ensuremath{\texttt{(}}}
\newcommand{\ttrparen}{\ensuremath{\texttt{)}}}
\newcommand{\ttlbrack}{\ensuremath{\texttt{[}}}
\newcommand{\ttrbrack}{\ensuremath{\texttt{]}}}

\newcommand{\ttf}{\ensuremath{\texttt{f}}}

\newcommand{\ttm}{\ensuremath{\texttt{m}}}

\newcommand{\ttx}{\ensuremath{\texttt{x}}}

\newcommand{\ttz}{\ensuremath{\texttt{z}}}

\newcommand{\ttcomma}{\ensuremath{\texttt{,}}}


\newcommand{\varConst}{c}
\newcommand{\varNum}{n}

\newcommand{\varStr}{s}
\newcommand{\varBool}{b}
\newcommand{\varVal}{v}

\newcommand{\varExp}{e}
\newcommand{\varExpBare}{e}

\newcommand{\varVar}{x}
\newcommand{\varVarY}{y}
\newcommand{\varPat}{p}
\newcommand{\varPatBare}{p}

\newcommand{\varExpId}{i}

\newcommand{\varField}{f}
\newcommand{\varEnv}{E}


\newcommand{\helperop}[1]{\ensuremath{\mathit{#1}}}

\newcommand{\dom}[1]{\ensuremath{\helperop{dom}({#1})}}
\newcommand{\freeVars}[1]{\ensuremath{\helperop{fv}({#1})}}
\newcommand{\coerceToExp}[1]{\ensuremath{\helperop{exp}({#1})}}

\newcommand{\listDiff}
  {\ensuremath{\Delta}}
\newcommand{\diffKeep}
  {\ensuremath{\helperop{Keep}}}
\newcommand{\diffDelete}
  {\ensuremath{\helperop{Delete}}}
\newcommand{\diffInsert}[1]
  {\ensuremath{\helperop{Insert}({#1})}}
\newcommand{\diffUpdate}[1]
  {\ensuremath{\helperop{Update}({#1})}}
\newcommand{\emptyDiff}
  {\ensuremath{\valNil}}
\newcommand{\diffCons}[2]{\expCons{#1}{#2}}
\newcommand{\computeListDiff}[2]{\ensuremath{\helperop{Diff}({#1},{#2})}}
\newcommand{\foldListDiffEquals}[5]
  {\ensuremath{\updatesTo{#1}{#2}{{}_{\helperop{Diff}}\miniSepThree{#3}}{#4}{#5}}}
\newcommand{\coerceToVal}[1]{\ensuremath{\helperop{val}({#1})}}


\newcommand{\twoThings}[2]{\ensuremath{{#1}\miniSepThree{#2}}}
\newcommand{\threeThings}[3]{\ensuremath{{#1}\miniSepThree\twoThings{#2}{#3}}}
\newcommand{\fourThings}[4]{\ensuremath{{#1}\miniSepThree\threeThings{#2}{#3}{#4}}}
\newcommand{\fiveThings}[5]{\ensuremath{{#1}\miniSepThree\fourThings{#2}{#3}{#4}{#5}}}

\newcommand{\expOneThing}[2]{\ensuremath{{\twoThings{#1}{#2}}}}
\newcommand{\expTwoThings}[3]{\ensuremath{{\threeThings{#1}{#2}{#3}}}}
\newcommand{\expThreeThings}[4]{\ensuremath{{\fourThings{#1}{#2}{#3}{#4}}}}

\newcommand{\num}[1]{\ensuremath{\texttt{#1}}}
\newcommand{\op}[1]{\ensuremath{\texttt{#1}}}

\newcommand{\parens}[1]{\ensuremath{\ttlparen{#1}\ttrparen}}
\newcommand{\bracks}[1]{\ensuremath{\ttlbrack{#1}\ttrbrack}}
\newcommand{\spaceOne}[1]{\ensuremath{\miniSepOne{#1}\miniSepOne}}
\newcommand{\spaceTwo}[1]{\ensuremath{\miniSepTwo{#1}\miniSepTwo}}
\newcommand{\spaceThree}[1]{\ensuremath{\miniSepThree{#1}\miniSepThree}}
\newcommand{\expFun}[2]{\ensuremath{{\lambda}{#1}.\miniSepOne{#2}}}
\newcommand{\valNil}{\ensuremath{\ttlbrack\ttrbrack}}
\newcommand{\expCons}[2]{\ensuremath{{#1}\spaceTwo{::}{#2}}}
\newcommand{\expList}[3]{\bracks{\spaceOne{\threeThings{#1\ttcomma}{#2\ttcomma}{#3}}}}
\newcommand{\expListOne}[1]
  {\bracks{\spaceOne{#1}}}
\newcommand{\expListTwo}[2]
  {\bracks{\spaceOne{\twoThings{#1\ttcomma}{#2}}}}
\newcommand{\expListThree}[3]
  {\bracks{\spaceOne{\twoThings{#1\ttcomma}{#2\ttcomma}{\miniSepOne#3}}}}
\newcommand{\expListFive}[5]
  {\bracks{\spaceOne{\fiveThings{#1\ttcomma}{#2\ttcomma}{#3\ttcomma}{#4\ttcomma}{#5}}}}
\newcommand{\expApp}[2]{\ensuremath{{#1}\miniSepThree{}{#2}}}
\newcommand{\expAppTwo}[3]{\ensuremath{\expApp{#1}{\twoThings{#2}{#3}}}}

\newcommand{\expBinop}[3]
  {\ensuremath{{#2}\spaceThree{#1}{#3}}}
\newcommand{\expPlus}[2]
  {\expBinop{\op{+}}{#1}{#2}}
\newcommand{\expEqual}[2]
  {\expBinop{\op{==}}{#1}{#2}}

\newcommand{\expConcat}[2]
  {\expBinop{+}{#1}{#2}}
\newcommand{\charList}[1]
  {\ensuremath{\langle{#1}\rangle}}

\newcommand{\expStr}[1]
  {\ensuremath{\texttt{\textquotedbl{#1}\textquotedbl}}}
\newcommand{\expLongStr}[1]
  {\ensuremath{\expStr{\expStr{\expStr{#1}}}}}
\newcommand{\expOutputStr}[1]
  {\ensuremath{\textsl{\textquotedbl{#1}\textquotedbl}}}
\newcommand{\expTrue}{\ensuremath{\texttt{True}}}
\newcommand{\expFalse}{\ensuremath{\texttt{False}}}

\newcommand{\closureSyntax}[2]{\ensuremath{({#1},\miniSepThree{#2})}}
\newcommand{\closure}[3]{\ensuremath{\closureSyntax{#1}{\expFun{#2}{#3}}}}

\newcommand{\expFreeze}[1]{\texttt{freeze}\miniSepThree{#1}}

\newcommand{\expLet}[3]
  {{\twoThings{{\texttt{let}\miniSepThree\ensuremath{{#1}\miniSepThree{#2}}}}{#3}}}
\newcommand{\expLetLong}[3]
  {\ensuremath{\texttt{let}\miniSepThree{#1}\spaceThree{=}{#2}\spaceThree{\texttt{in}}{#3}}}
\newcommand{\expLetRec}[3]
  {\expThreeThings{\texttt{letrec}}{#1}{#2}{#3}}



\newcommand{\expIte}[3]
  {\texttt{if}\miniSepThree{#1}\spaceThree{#2}{#3}}
\newcommand{\expIteLong}[3]
  {\ensuremath{\texttt{if}\spaceThree{#1}\texttt{then}\spaceThree{#2}\texttt{else}\miniSepThree{#3}}}

\newcommand{\expCaseTwo}[3]
  {\expTwoThings{\texttt{case}}{#1}{\parens{#2}\miniSepThree{#3}}}

\newcommand{\expBranchWithId}[4]{{\expOneThing{#3}{#4}}}

\newcommand{\expEval}[1]{\texttt{eval}\miniSepThree{#1}}
\newcommand{\expApplyLens}[2]{\expTwoThings{\texttt{applyLens}}{#1}{#2}}

\newcommand{\regexextract}{\texttt{extract}}
\newcommand{\expRegexExtract}[2]{\expTwoThings{\regexextract}{#1}{#2}}
\newcommand{\regexreplace}{\texttt{replace}}
\newcommand{\expRegexReplace}[3]{\expThreeThings{\regexreplace}{#1}{#2}{#3}}

\newcommand{\expRecdExtend}[3]{\ensuremath{\ttlcurly{#1}\spaceThree{|}{#2}={#3}\ttrcurly}}
\newcommand{\expRecd}[2]{\ensuremath{\ttlcurly{#1}={#2}\ttrcurly}}
\newcommand{\expRecdDots}[2]{\ensuremath{\ttlcurly{#1}={#2};\ \cdots\ttrcurly}}
\newcommand{\expRecdTwo}[4]{\ensuremath{\ttlcurly{#1}={#2};\ {#3}={#4}\ttrcurly}}
\newcommand{\expRecdThree}[6]
  {\ensuremath{\ttlcurly{#1}={#2};\ {#3}={#4};\ {#5}={#6}\ttrcurly}}
\newcommand{\valEmptyRecd}{\ensuremath{\ttlcurly\ttrcurly}}

\newcommand{\expProjField}[2]{\ensuremath{{#1}.{#2}}}
\newcommand{\expProj}[2]{\expProjField{#1}{#2}}

\newcommand{\expProjStatic}[3]
  {\ensuremath{({#1}\miniSepTwo@\miniSepTwo{#2})[{#3}]}}

\newcommand{\ttfld}[1]{\ensuremath{\mathtt{{#1}}}}

\newcommand{\emptyEnv}{-}
\newcommand{\envCat}[2]{\ensuremath{{#1},\hspace{0.02in}{#2}}}
\newcommand{\envCatThree}[3]{\ensuremath{\envCat{#1}{\envCat{#2}{#3}}}}

\newcommand{\envBind}[2]{\ensuremath{{#1}\mapsto{#2}}}


\newcommand{\envExp}[2]
  {{#1}\vdash{#2}}


\newcommand{\evalArrowColor}
  {black}
\newcommand{\updateArrowColor}
  {black}

\newcommand{\reducesTo}[3]
  {\envExp{#1}{#2}\hspace{0.01in}\begingroup\color{\evalArrowColor}\Rightarrow\endgroup{#3}}

\newcommand{\simpleReducesTo}[2]
  {{#1}\Rightarrow{#2}}

\newcommand{\simpleUpdatesTo}[3]
  {{#1}\Leftarrow{#2}\rightsquigarrow{#3}}

\newcommand{\program}[2]
  {{#1}\vdash{#2}}
\newcommand{\updatesTo}[5]
  {\program{#1}{#2}\begingroup\color{\updateArrowColor}\Leftarrow\hspace{0.01in}\endgroup{#3}\rightsquigarrow\program{#4}{#5}}
\newcommand{\updatesToFancyUnderbrace}[6]
  {\program{#1}{#2}\Leftarrow\hspace{0.01in}{#3}\rightsquigarrow\underbrace{\program{#4}{#5}}_{#6}}
\newcommand{\pessimisticMark}{\checkmark}

\newcommand{\updatesToPessimistic}[5]
  {\program{#1}{#2}\begingroup\color{\updateArrowColor}
   \Leftarrow\endgroup{#3}\overset{\pessimisticMark}
   \rightsquigarrow\program{#4}{#5}}

\newcommand{\mergeEnvsAllArgs}[5]
  {{#2}\fourIdx{#4}{}{#5}{#1}\oplus{#3}}

\newcommand{\mergeEnvsTwoWay}[4]
  {\mergeEnvsAllArgs{}{#1}{#2}{#3}{#4}}

\newcommand{\mergeThree}[3]
  {\mergeEnvsAllArgs{#1}{#2}{#3}{}{}}
\newcommand{\combineEnvs}[3]{\mergeThree{#1}{#2}{#3}} 

\newcommand{\mergeVals}[3]{\combineEnvs{#1}{#2}{#3}}
\newcommand{\mergeExps}[3]{\combineEnvs{#1}{#2}{#3}}

\newcommand{\equivEnvs}[3]{{#1}\equiv_{#3}{#2}}

\newcommand{\startAlignedPremises}{\begin{array}[b]{@{}r@{}c@{}l@{}}}
\newcommand{\stopAlignedPremises}{\end{array}}
\newcommand{\reducesToAligned}[3]
  {
   \envExp{#1}{#2} &
   \hspace{0.04in}\begingroup\color{\evalArrowColor}\Rightarrow\endgroup\hspace{0.035in} &
   {#3}
  }
\newcommand{\updatesToAligned}[5]
  {
   \program{#1}{#2} &
   \hspace{0.03in}\begingroup\color{\updateArrowColor}\Leftarrow\endgroup\hspace{0.045in} &
   {#3}\rightsquigarrow\program{#4}{#5}
  }
\newcommand{\updatesToAlignedPessimistic}[5]
  {
   \program{#1}{#2} &
   \hspace{0.03in}\begingroup\color{\updateArrowColor}\Leftarrow\endgroup\hspace{0.045in} &
   {#3}\overset{\pessimisticMark}{\rightsquigarrow}\program{#4}{#5}
  }
\newcommand{\equalsAligned}[2]
  {#1 &=& #2}


\newcommand{\benchmarks}{
%
%
\tableRow   {States Table A\hasvideo{}} { 37} { 304} {\plusminus{20}} {  2:36   } {  11 } {\nudgeRight57&\plusminus{5} & \nudgeRight154&\plusminus{20} & 85&\plusminus{20}&\speedup{200}} {  \fromto{1}  {2}     } {   1.18  } 
\tableRow   {States Table B\hasvideo{}} {126} { 774} {\plusminus{70}} {  0:43   } {  7  } {\nudgeRight256&\plusminus{40} & \nudgeRight456&\plusminus{50} & 331&\plusminus{80}&\speedup{700}} {  \always{1} } {1        } 
\tableRow   {Recipe\hasvideo{}   } {193} {1455} {\plusminus{80}} {  3:51   } {  17 } {\nudgeRight243&\plusminus{30} & \nudgeRight2237&\plusminus{200} & 1328&\plusminus{500}&\speedup{16}} {  \fromto{1}  {2}     } {   1.05  } 
\tableRow   {Budgetting          } { 37} { 328} {\plusminus{11}} {  0:45   } {  7  } {7&\plusminus{0.9} & \nudgeRight13&\plusminus{2} & 9&\plusminus{2}&\speedup{80}} {  \fromto{1}  {3}     } {    2    } 
\tableRow   {MVC                 } { 71} { 720} {\plusminus{50}} {  1:11   } {  10 } {\nudgeRight216&\plusminus{10} & \nudgeRight483&\plusminus{120} & 289&\plusminus{80}&\speedup{40}} {  \always{1} } {1        } 
\tableRow   {Linked-Text         } { 91} { 855} {\plusminus{40}} {  0:53   } {  5  } {\nudgeRight1886&\plusminus{140} & \nudgeRight2252&\plusminus{300} & 2025&\plusminus{200}&\speedup{5}} {  \fromto{1}  {2}     } {   1.2   } 
\tableRow   {Markdown            } {128} {1179} {\plusminus{110}} {  2:08   } {  6  } {\nudgeRight1369&\plusminus{90} & \nudgeRight1889&\plusminus{150} & 1607&\plusminus{200}&\speedup{13}} {  \always{1} } {1        } 
\tableRow   {Dixit               } {130} { 705} {\plusminus{40}} {  0:00   } {  15 } {\nudgeRight87&\plusminus{6} & \nudgeRight2205&\plusminus{4000} & 417&\plusminus{1500}&\speedup{120}} {  \always{1} } {1        } 
\tableRow   {Translation         } {122} { 357} {\plusminus{20}} {  0:00   } {  8  } {\nudgeRight187&\plusminus{12} & \nudgeRight1085&\plusminus{200} & 415&\plusminus{200}&\speedup{50}} {  \fromto{1}  {8}     } {    2    } 
\tableRow   {\LaTeX{} in HTML    } {534} {1648} {\plusminus{200}} {  0:00   } {  6  } {\nudgeRight413&\plusminus{50} & \nudgeRight3183&\plusminus{500} & 943&\plusminus{1000}&\speedup{150}} {  \always{1} } {1        } 
}
\newcommand{\benchmarksloc}{1469}
\newcommand{\benchmarkslocfloored}{1400}
\newcommand{\benchmarksnum}{10}
\newcommand{\benchmarksevalaverage}{833}
\newcommand{\benchmarksevalstddev}{\plusminus{400}}

\newcommand{\benchmarksnumupd}{92}
\newcommand{\benchmarksaverageoptupd}{723}

\newcommand{\benchmarksaveragestddev}{\plusminus{900}}
\newcommand{\benchmarksaveragespeedup}{\speedup{70}}

\newcommand{\benchmarkssolutionsaverage}{1.18}

\usepackage[firstpage]{draftwatermark}
\SetWatermarkText{\hspace*{4.2in}\raisebox{8.8in}{%
  \includegraphics[scale=0.33]{}%
  \includegraphics[scale=0.33]{}%
  \includegraphics[scale=0.33]{}%
}}
\SetWatermarkAngle{0}

\begin{document}

\title
  {Bidirectional Evaluation with Direct Manipulation}

\author{Mika\"{e}l Mayer}
\affiliation{
  \institution{University of Chicago}            
  \country{USA}                   
}
\email{mikaelm@uchicago.edu}          

\author{Viktor Kun\v{c}ak}
\affiliation{
  \institution{\'{E}cole Polytechnique F\'{e}d\'{e}rale de Lausanne}           
  \country{Switzerland}                   
}
\email{viktor.kuncak@epfl.ch}         

\author{Ravi Chugh}
\affiliation{
  \institution{University of Chicago}
  \country{USA}                   
}
\email{rchugh@cs.uchicago.edu}


\begin{abstract}

We present an \emph{evaluation update} (or simply, \emph{update}) algorithm
for a full-featured functional programming language,
which synthesizes program changes based on output changes.
Intuitively, the update algorithm retraces the steps of the original evaluation,
rewriting the program as needed to reconcile differences between the original
and updated output values.
Our approach, furthermore, allows expert users to define custom \emph{lenses}
that augment the update algorithm with more advanced or domain-specific program
updates.

To demonstrate the utility of evaluation update, we implement the algorithm in
\snsLeo{}, a novel \emph{direct manipulation programming system} for generating
HTML documents.
In \snsLeo{}, the user writes an ML-style functional program to generate HTML
output.
When the user directly manipulates the output using a graphical user
interface, the update algorithm reconciles the changes.
We evaluate bidirectional evaluation in \snsLeo{} by authoring
ten examples
comprising approximately \benchmarkslocfloored{}
lines of code in total.
These examples demonstrate how a variety of HTML documents and applications can
be developed and edited interactively in \snsLeo{}, mitigating the tedious
edit-run-view cycle in traditional programming environments.

\end{abstract}

 \begin{CCSXML}
<ccs2012>
<concept>
<concept_id>10011007.10011006.10011008</concept_id>
<concept_desc>Software and its engineering~General programming
languages</concept_desc>
<concept_significance>500</concept_significance>
</concept>
<concept>
<concept_id>10011007.10011006.10011050.10011023</concept_id>
<concept_desc>Software and its engineering~Specialized application
languages</concept_desc>
<concept_significance>500</concept_significance>
</concept>
<concept>
<concept_id>10003120.10003121.10003124.10010865</concept_id>
<concept_desc>Human-centered computing~Graphical user interfaces</concept_desc>
<concept_significance>300</concept_significance>
</concept>
</ccs2012>
\end{CCSXML}

\ccsdesc{Software and its engineering~General programming languages}
\ccsdesc{Software and its engineering~Specialized application languages}
\ccsdesc{Human-centered computing~Graphical user interfaces}

\keywords{
Bidirectional Programming,
Direct Manipulation,
Sketch-n-Sketch
}

\maketitle

\section{Introduction}
\label{sec:intro}

\newcommand{\limitationA}{Limitation A}
\newcommand{\limitationB}{Limitation B}
\newcommand{\limitationC}{Limitation C}
\newcommand{\limitationD}{Limitation D}
\newcommand{\limitations}{Limitations}

Expert programmers often choose to write programs to generate
digital objects that might otherwise be created in
graphical user interfaces (GUIs)
using \emph{direct manipulation}~\cite{Shneiderman1983,Hutchins:1985},
because GUIs typically lack powerful mechanisms for abstraction and reuse.
To name just a few example languages and libraries,
\LaTeX{} is particularly popular for generating documents;
JavaScript, Ruby, and Elm for \web{} applications;
Processing and p5.js (\url{http://p5js.org/}) for graphic designs;
\LaTeX{} and Slideshow~\cite{SlideshowJFP06} for
slide-based presentations; and
D3~\cite{BostockVIS2011} for data visualizations.

The benefits of programming, however, come at a steep cost: to
change the output of a program, the user must edit the source code, run it
again, and view the new output, often repeating this loop ad nauseaum.
Effort wasted in this way is particularly galling when
successive program changes---and the resulting output changes---are
small and narrow in scope.
Ideally, the user would ``directly manipulate'' the program output,
and the system would run the program ``in reverse'' to synthesize
necessary program repairs.

\parahead{Prior Approaches}

Two primary approaches help address the goal to run programs in reverse.
In \emph{bidirectional programming
languages}~\cite{lenses}, data transformations are
defined as \emph{lenses}, in which a $get$-function for forward-evaluation
is paired with a $put$-function for backward-evaluation---when the
output of $get$ is changed, $put$ specifies how to change the input.
A bidirectional language provides a set of domain-specific lens
primitives, from which programmers mold desired transformations using
\emph{lens combinators}.
Lenses have proven to be effective for defining bidirectional
transformations in a variety of domains---including structured data
(relational tables), semi-structured data (trees), unstructured data
(strings), and graphs.
Nevertheless, lenses are not a
solution for automatically reversing the computation of an arbitrary
program---data \emph{and} code---written in a general-purpose
functional language (\limitationA).

Another approach, developed by \citet{sns-pldi}, aims to reverse arbitrary
programs as follows.
First, the interpreter records \emph{value traces} to track the provenance of
how values are computed.
Then, when the user makes small changes to the output, updated value-trace
equations are solved in order to synthesize repairs to the program.
Although a useful step, this approach suffers several limitations.
First, the formulation supports tracing and updates for numeric values, but not
for other types of simple or more complex values (\limitationB).
Second, the formulation provides no way for expert users to customize the
behavior of the algorithm (\limitationC).
This is a significant limitation in practice, because no single update algorithm
for arbitrary programs can work well in all use cases.
Furthermore, even if extended to address the aforementioned limitations, the
approach requires that \emph{all} computations be traced even if many (or
most) values are not updated by the user.
For a large program where the subset of values that are
directly manipulated becomes a small fraction, the space overhead of this
approach could become a bottleneck, as is often the case for
systems that record execution traces, such as
\emph{omniscient debuggers}~\cite{Pothier2007}
and
\emph{query-based debuggers}~\cite{Ko:2008}
(\limitationD).

\parahead{Our Approach: Bidirectional Evaluation}

In contrast to prior approaches,
we propose a notion called \emph{bidirectional evaluation} for
programs in a full-featured, general-purpose functional programming language.
In addition to a standard evaluation relation
$\simpleReducesTo{\varExp}{\varVal}$ that evaluates expression $\varExp$ to
value $\varVal$, we define an \emph{evaluation update} (or simply,
\emph{update}) relation $\simpleUpdatesTo{\varExp}{\varVal'}{\varExp'}$ that,
given an expected value $\varVal'$, rewrites the original expression $\varExp$
to $\varExp'$.
Evaluation update proceeds by comparing the original output value $\varVal$ with
the goal $\varVal'$, and synthesizing repairs to $\varExp$ such that, ideally,
the new program $\varExp'$ evaluates to $\varVal'$.
Evaluation update is defined for arbitrary expressions $\varExp$ producing
arbitrary types of values $\varVal$, thus addressing \limitations{} A and B,
respectively.
Our approach relies on standard, uninstrumented evaluation; we
re-evaluate expressions as needed during update.
This approach trades time for space, thus addressing \limitationD.

Furthermore, we allow expert users to define custom lenses that augment the
update algorithm with more advanced or domain-specific program updates, thus
addressing \limitationC.
In particular, in place of an ordinary function application
$\expApp{\varExp_{\mathit{get}}}{\varExp}$, the program can define a
\emph{lens application}
$\expApplyLens
        {\expRecdTwo{\ttfld{apply}}{\varExp_{\mathit{get}}}
                    {\ttfld{update}}{\varExp_{\mathit{put}}}}
        {\varExp}$,
in which case, the update algorithm uses the designated \verb+update+
function $\varExp_{\mathit{put}}$ to help compute a new expression $\varExp'$
to replace the argument $\varExp$.

\parahead{Our Implementation: Direct Manipulation Programming for HTML}

We implement bidirectional evaluation within \snsLeo{}~\cite{sns-pldi}, an
interactive programming system for developing and editing graphical objects.
In the new system, the user writes a program in a functional,
ML-style language to generate HTML output.
When the user directly manipulates the output using a graphical user
interface, the update algorithm synthesizes repairs to reconcile the changes.
Our user interface provides a lightweight mechanism for previewing
and choosing a solution when there is ambiguity, inherent
to the setting of a general-purpose language.

We used our new version of \snsLeo{} to author \benchmarksnum{} examples comprising approximately
\benchmarkslocfloored{} lines in total, demonstrating how a variety of interactive
documents and applications---\web{} pages, Markdown-to-HTML
translators, scalable recipe editors,
and
what-you-see-is-what-you-get (WYSIWYG)
\LaTeX{} editors---can
be programmed in a way that allows direct manipulation changes to
propagate automatically back to the program.
Moreover, our prototype implementation typically
synthesizes program repairs for our examples in between 0 and 2 seconds,
which suggests that our techniques can be further developed and optimized for
more full-featured, interactive settings.

\parahead{Contributions and Outline}

To summarize, this paper provides the following contributions.

\begin{enumerate}

\item
We present the notion of \emph{bidirectional evaluation}, where arbitrary programs
in a general-purpose functional language can be run in reverse in order to
produce useful edits to the program.
To achieve this, we define an \emph{evaluation update} algorithm that---compared
to typical evaluation---receives an expected output value as an argument, used to
synthesize repairs to the expression such that it computes the expected value.
(\autoref{sec:leo-basic-update})

\item
We develop an approach for \emph{custom update lenses} that allow experts to
augment evaluation update with more advanced or domain-specific program updates.
To improve the utility of the ``built-in'' evaluation update algorithm, we show
how to define custom update lenses for several common functional programming
patterns.
(\autoref{sec:leo-custom-update})

\item
We implement our approach within \snsLeo{}; the new system is available on the \web{} at
\url{http://ravichugh.github.io/sketch-n-sketch/}.
Our implementation includes optimizations to make the update
algorithm perform well in practice, as well as programming conveniences
found in practical, ML-style functional languages.
Our examples and experiments demonstrate that the expressiveness and performance of
bidirectional evaluation in \snsLeo{} helps integrate the benefits of
programmatic and direct manipulation.
(\autoref{sec:implementation} and \autoref{sec:evaluation})

\end{enumerate}

\noindent
In the remainder of the paper, unqualified references to \snsLeo{} refer
to the new system.
Next, in \autoref{sec:overview}, we describe an overview example to introduce
the workflow enabled by bidirectional evaluation in \snsLeo{}, before describing
the approach in detail.

\section{Overview}
\label{sec:overview}

\setlength{\intextsep}{6pt}%
\setlength{\columnsep}{10pt}%

Consider the task for a \web{} developer to implement an HTML table that displays
each of the United States along with their capital cities.
In \snsLeo{}, the developer first writes a program in \leo{}---a functional
language that resembles Elm (\url{http://elm-lang.org/})---that generates a prototype.
The initial programming effort required to encode all intended data and
presentation constraints is similar to when using traditional text-based
programming environments.
Afterwards, however, \snsLeo{} allows the developer to:
(a) edit the data and design parameters through direct manipulation
interactions; and
(b) add rows to the table through a custom, library-defined user interface.
\snsLeo{} synthesizes program repairs based on these interactions.

\subsection{Initial Programming Effort}


\begin{figure}[t]
\begin{center}
\includegraphics[width=\linewidth]{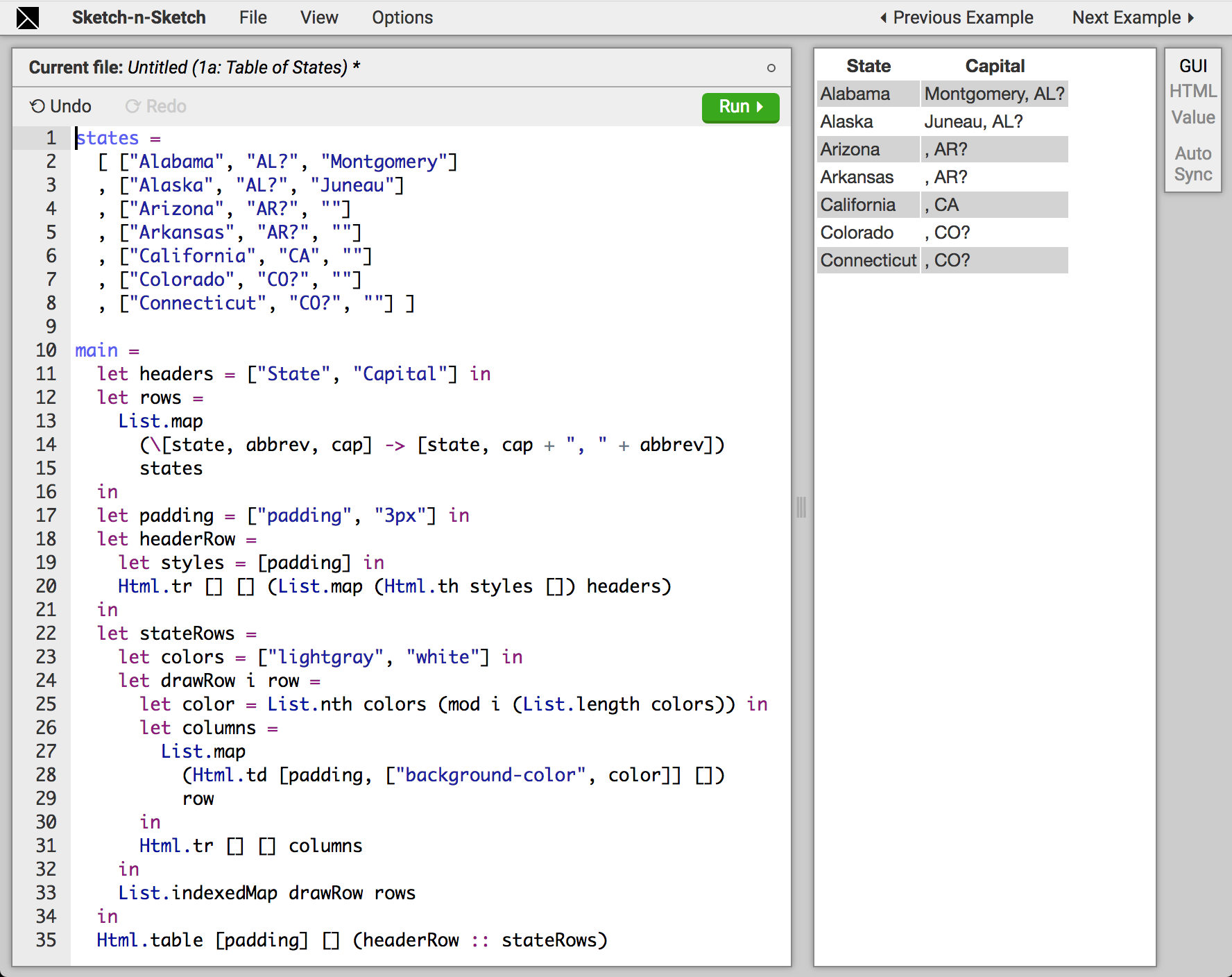}
\end{center}
\caption{A program written in \snsLeo{} that generates an HTML table of state
names and capital cities. This initial programming effort is performed with
traditional, text-based editing of source code.}
\label{fig:overview-initial-program}
\end{figure}

\autoref{fig:overview-initial-program} shows a \leo{} program
to generate an initial prototype.
In the following, we typeset string literals in the program with
typewriter font (\eg{}~\expStr{California}) and strings in the HTML output with
slanted font (\eg{}~\expOutputStr{California}).
Lines 1 through 8 define the table data; each element of
\verb+states+ is a three-element list, containing a state name, two-letter
abbreviation, and capital city.
For now, the data is incomplete---unknown abbreviations and capitals are
marked with question marks (\eg{}~\expStr{AL?} on lines 1--2) and
empty strings (\ie{}~\expStr{} on lines 4--8).

The \verb+main+ definition, starting on line 10, generates the output HTML
table.
First, the developer decides to produce two output columns: one for the state
name (\eg{}~\expOutputStr{Alabama}); and one for its capital city,
concatenated with the state abbreviation (\eg{}~\expOutputStr{Montgomery, AL}).
The \verb+headers+ definition (line 11) contains text for
the header row, and the \verb+rows+ definition (lines 12--15)
contains the text to display in subsequent rows by mapping
each three-element list
\verb+[state, abbrev, cap]+ in \verb+states+ to the two-element list
\verb-[state, cap + ", " + abbrev]-.
The \verb+headerRow+ definition (lines 18--20) uses library
functions \verb+Html.tr+ and \verb+Html.th+ to generate table row and
header elements, respectively, for the top of the table.
These \verb+Html+ functions take three arguments---a list of HTML
style attributes, a list of additional HTML attributes, and a list of HTML
child nodes---and produce straightforward encodings of HTML values to be
rendered.

The \verb+stateRows+ definition (lines 22--33) generates the
remaining rows of the table.
The \verb+colors+ list (line 23) defines two initial
colors---\expStr{lightgray} and \expStr{white}---and
the expression (line 25) chooses one of these colors based on the parity
of row index \verb+i+ (received as a parameter from the
\verb+List.indexedMap+ library function).
The \verb+columns+ definition (lines 26--29) places the text
for each state and its capital city---in a two-element list
\verb+row+---inside \verb+Html.td+ elements, which comprise a row
built from the \verb+Html.tr+ expression (line 31).

Lastly, the expression on line 35 builds the overall \verb+Html.table+
element comprising \verb+headerRows+ and \verb+stateRows+.
The output \leo{} value is translated to HTML in a straightforward manner
and rendered graphically in the
right half of \snsLeo{}, as shown in
\autoref{fig:overview-initial-program}.

\subsection{Direct Manipulation of Output Text}
\label{sec:overview-text-edit}

\begin{figure}[t]
\begin{center}
\includegraphics[width=\linewidth]{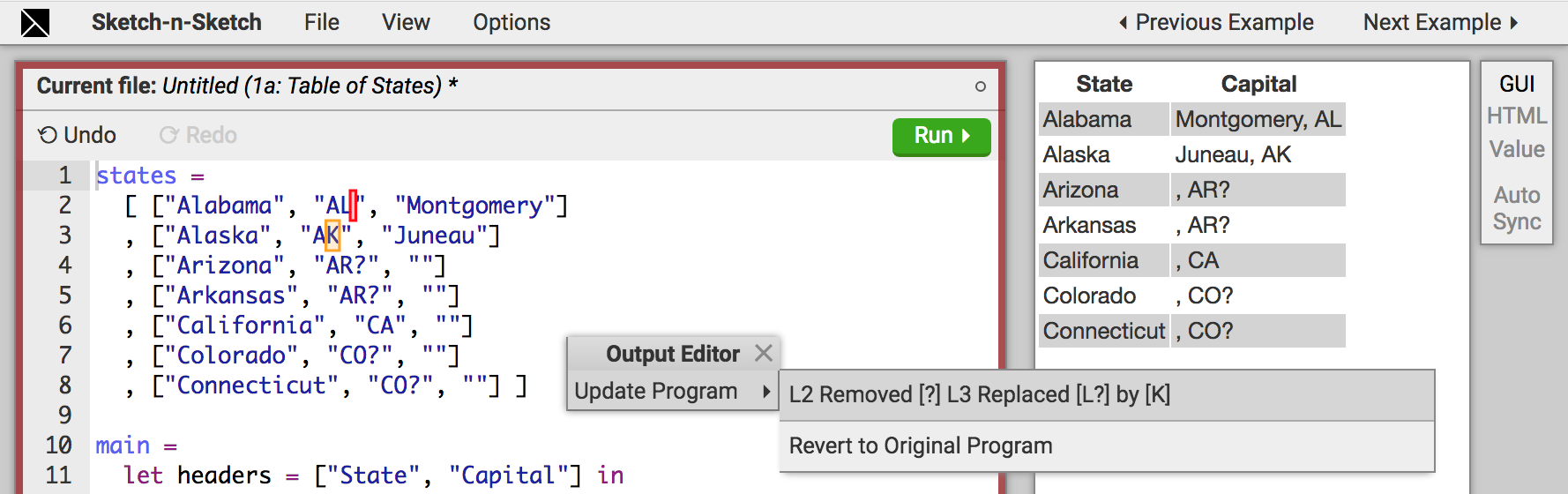}
\caption{
Direct Manipulation Text Edit.
(1) In the \expOutputStr{Capital} column, the user changes \expOutputStr{AL?} in
the first row to \expOutputStr{AL} and \expOutputStr{AL?} in the second row to
\expOutputStr{AK}.
(2) The output is out of synchronization with the program, so the
editor displays a pop-up menu.
(3) The user hovers the \valuePropToolName{} menu item, which then displays
a nested menu with one possible program repair.
(4) The user hovers this option, which previews the new code and output in the
left and right panes, respectively.
The screenshot captures this last step.
}
\label{fig:overview-text-edit-AL-AK}
\end{center}
\end{figure}

Having encoded the intended programmatic relationships for the data and
design of the table, next the developer wants to correct the missing
data (lines 2--8).
In \snsLeo{}, the developer can edit text directly in the graphical user
interface that displays the output (the right half of the editor).
The interactions described in this subsection, and the following ones, can be
viewed in screencast videos, available on the \web{}.

\newcommand{\paraComputingDisplaying}{Computing and Displaying Program Updates}
\parahead{\paraComputingDisplaying}

\autoref{fig:overview-text-edit-AL-AK} shows an example of how the developer
edits the data in the program through the graphical user interface interface;
the screenshot shows the editor state \emph{after} the following sequence of
user actions.

First, in the first state row of the output table, the user deletes the question
mark after \expOutputStr{AL} in the string \expOutputStr{, AL?}.
Next, in the second row, the user replaces the string \expOutputStr{AL?} with
\expOutputStr{AK}.
As soon as the user begins editing the output table, \snsLeo{} detects that the
program output is no longer synchronized with the program.
As a result, \snsLeo{} highlights the code box with a red border and
displays a pop-up window with a menu item labeled \valuePropToolName{}.

When the user hovers over \valuePropToolName{}, \snsLeo{} runs the evaluation
update algorithm to synthesize a repaired program that, when re-evaluated,
generates the same result as the directly manipulated output.
In this case, the algorithm computes one solution that, along with an option for
reverting the changes, is displayed in a nested menu to the
right of \valuePropToolName{}.
The screenshot in \autoref{fig:overview-text-edit-AL-AK} captures the editor
state when the user hovers over the first item in the nested menu, at which
point \snsLeo{} displays a preview of the updated program (resp. output)
directly in the left (resp. right) pane.
The caption \expOutputStr{L2 Removed [?] L3 Replaced [L?] by [K]} summarizes the
\emph{string differences}, on lines 2 and 3, between the original and updated
program text.
These differences are highlighted in red and orange in the code box to further
help communicate the proposed changes to the user.
In this case, the new program matches the user's expectations, so the user
clicks the menu item (not shown in the screenshot) to confirm the update,
returning the program and output to a synchronized state.

\newcommand{\paraAmbiguity}{Ambiguity}
\parahead{\paraAmbiguity}

\begin{figure}[t]
\begin{center}
\includegraphics[width=\linewidth]{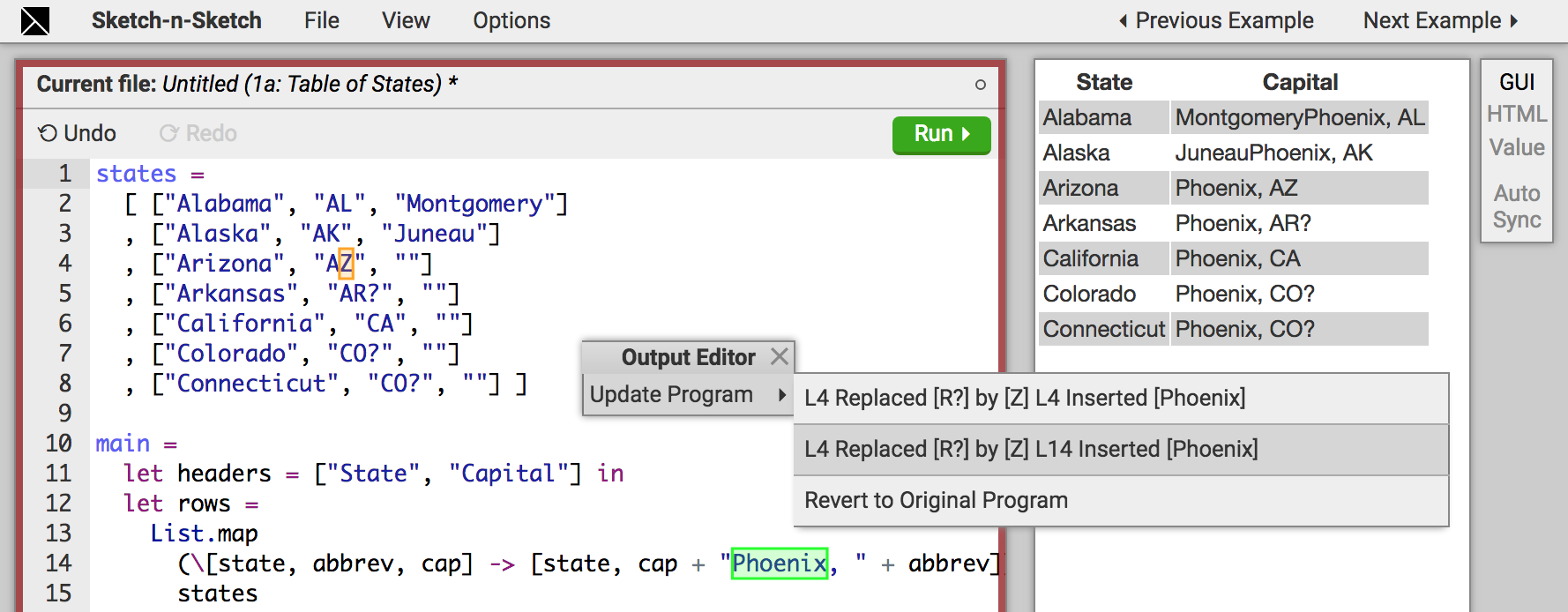}
\caption{
Direct Manipulation Text Edit with Ambiguity.
(1) In the \expOutputStr{Capital} column, the user changes \expOutputStr{?, AR?} in
the third row to \expOutputStr{Phoenix, AZ}.
(2) The editor displays a pop-up menu.
(3) The user hovers the \valuePropToolName{} menu item, resulting in two
candidate program repairs.
(4) The user hovers over the second one, which previews the (undesirable)
changes to the code and output.
The screenshot captures this last step.
}
\label{fig:overview-text-edit-AZ}
\end{center}
\end{figure}

Whereas each of the previous output changes resulted in a single solution,
\autoref{fig:overview-text-edit-AZ} shows a change that leads to multiple.
In the third row, the user replaces \expOutputStr{, AR?} with
\expOutputStr{Phoenix, AZ}.
When the \valuePropToolName{} menu item appears and is hovered, two solutions
are displayed (in addition to the option to revert the changes).
The screenshot in \autoref{fig:overview-text-edit-AZ} captures the editor state
when the second solution is hovered.
Both solutions replace \expStr{AR?} on line 4 with \expStr{AZ}, as desired, but
the second solution inserts \expStr{Phoenix} as a prefix to the \expStr{, }
separator string used in the concatenation on line 14.
By viewing the preview of the output---with \expOutputStr{Phoenix} appearing
in all rows---the user quickly determines that this change, though consistent
with the output edit, is undesirable.
So, the user hovers and selects the first option (not shown in the screenshot).
Wanting the separator string on line 14 always to remain constant,
the developer edits the source code to wrap the string \expStr{", "} in a
call to \verb+Update.freeze+ (not shown), which instructs \snsLeo{}
never to change this expression when computing program updates.

\parahead{Browser Conveniences for Navigating Output Text}

The user fills in missing data for the remaining rows directly in the output
pane.
Having frozen the separator string already, none of these changes lead to
ambiguity.
During these interactions, the user benefits from text-editing features built-in
to the browser---using the \key{Tab} key to advance to subsequent columns and rows,
and arrow keys to navigate the text cursor within the selected cell---which make
it yet more convenient to specify these changes in the graphical user interface
rather than in the source code editor.

\subsection{Direct Manipulation with DOM Inspector}
\label{sec:overview-inspector-edit}

\begin{figure}[t]
\begin{center}
\includegraphics[width=\linewidth]{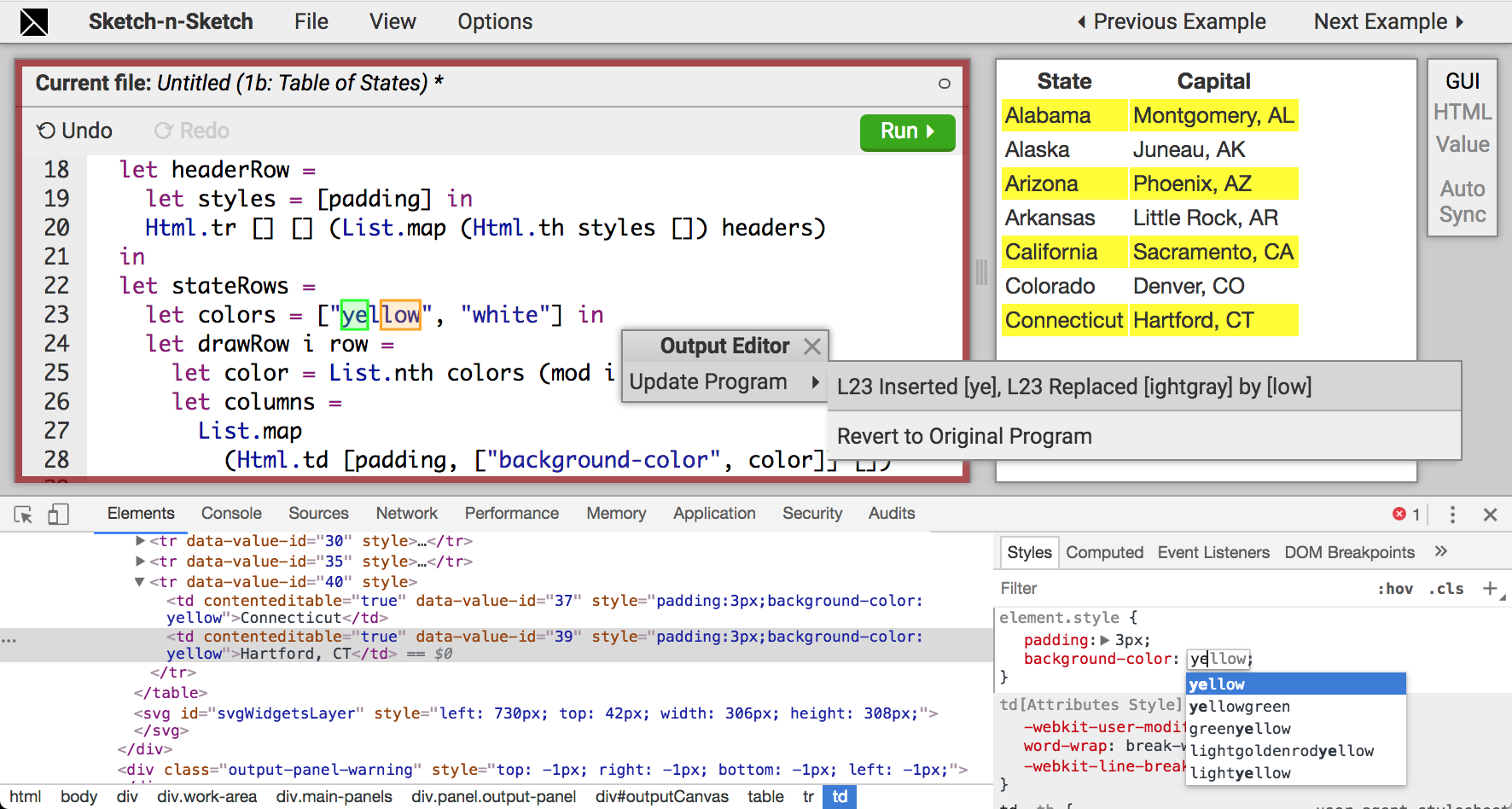}
\end{center}
\caption{
Direct Manipulation DOM Inspector Edit.
(1) The user right-clicks the \expOutputStr{Hartford, CT} cell and
selects \toolName{Inspect} from a built-in browser pop-up menu.
(2) The \browserToolName{DOM Inspector} pane appears at the bottom of the editor window with the
selected table data element in focus.
(3) The user changes the \texttt{background-color} of the selected table
cell from \texttt{lightgray} to \texttt{yellow}.
(4) The editor displays a pop-up menu.
(5) The user hovers the \valuePropToolName{} menu item, resulting in one
program repair.
(6) The user hovers over the option, previewing the effect of the new color on
all alternating rows.
The screenshot captures this last step.
}
\label{fig:overview-inspector-edit}
\end{figure}

Having corrected the table data
(as shown in \autoref{fig:overview-inspector-edit}),
next the developer experiments with different styles.
\snsLeo{} allows the developer to use the existing Developer Tools
provided by modern browsers to \emph{inspect} and \emph{modify} arbitrary
elements and attributes in the DOM (\ie{}~the HTML output of the program);
these changes are used to trigger the update algorithm.

\parahead{Browser Conveniences for Editing Styles}

Suppose the user wants to try out different colors for alternating rows, to
replace the colors on line 23.
\autoref{fig:overview-inspector-edit} shows how the user can affect such
changes.
First, the user right-clicks the \expOutputStr{Hartford, CT} cell and selects
\browserToolName{Inspect} from the browser's pop-up menu.
As a result, a \browserToolName{Developer Tools} pane appears at the bottom of
editor (as shown in the screenshot), with the selected cell in focus in the
\browserToolName{DOM Element Inspector}.
The rightmost panel provides a \browserToolName{Styles Editor},
which the developer can use to
change the \verb+background-color+ from the initial \verb+lightgray+ color.
The developer starts typing \verb+ye+ and, then, using
the built-in conveniences provided by the \browserToolName{Styles Editor} for
changing color
values---a dropdown menu of related colors, equipped with tab completion and
previews---decides to try the color \verb+yellow+.

As with the text changes before, \snsLeo{} detects that the output is no longer
synchronized with the program, and so displays the \valuePropToolName{} menu.
The screenshot in \autoref{fig:overview-inspector-edit} captures the editor
state when the user hovers the single solution, which replaces \expStr{lightgray} on
line 23 with \expStr{yellow} to reconcile the change.
In the result of this new program, the color of all cells in alternating
rows are changed, not just the one cell directly manipulated.

\newcommand{\paraAutoSync}{Automatic Synchronization}
\parahead{\paraAutoSync}

The developer wants to experiment with colors,
but manually hovering and clicking the \valuePropToolName{} menu will
be tedious when trying several options.
So, the developer clicks the button labeled
\toolName{Auto Sync} in the right toolbar.
Now whenever the output is changed,
the program update algorithm is
automatically run after a delay---1000ms by default, but this choice
can be configured by the program.
When there is a single solution, it is applied automatically.
Thus, the developer can try several colors in the
\browserToolName{DOM Inspector} in rapid fashion, viewing how the change propagates (almost)
immediately to the entire table.

\parahead{Small Updates}

Next, the developer wishes to add a background color to \verb+headerRow+, whose
\verb+styles+ list on line 19 does not include a color.
The following interactions are not depicted in screenshots, but they
are in the accompanying videos.
After selecting the first column of the header row in the browser
\browserToolName{DOM Inspector} (either by right-clicking, or using the browser's
built-in \browserToolName{Inspect} cursor), the
developer, again, uses the \browserToolName{Styles Editor},
which provides an easy way (with a mouse
click or \key{Enter} key press) to add a new attribute.
The user adds
a new \verb+background-color+ attribute set to the value \verb+orange+,
and the corresponding program update adds the pair
$\expListTwo{\expStr{background-color}}{\expStr{orange}}$ to the \verb+styles+
list on line 19.
Unlike all of the \emph{local updates}, described above, which replaced only
constant literals in the program with new ones, this solution includes
a \emph{structural update} that alters the structure of the abstract syntax tree.
We call this structural update \emph{pretty local} because the only change to the
structure is inserting a new literal at a leaf of the AST, \ie{}~inside another
list literal.
In our experience (\autoref{sec:evaluation}), even just ``small'' program
updates---local and pretty local---enable a variety of desirable interactions.

\subsection{Direct Manipulation with Custom User Interfaces}
\label{sec:overview-add-row}

\begin{figure}[t]
\begin{center}
\includegraphics[width=\linewidth]{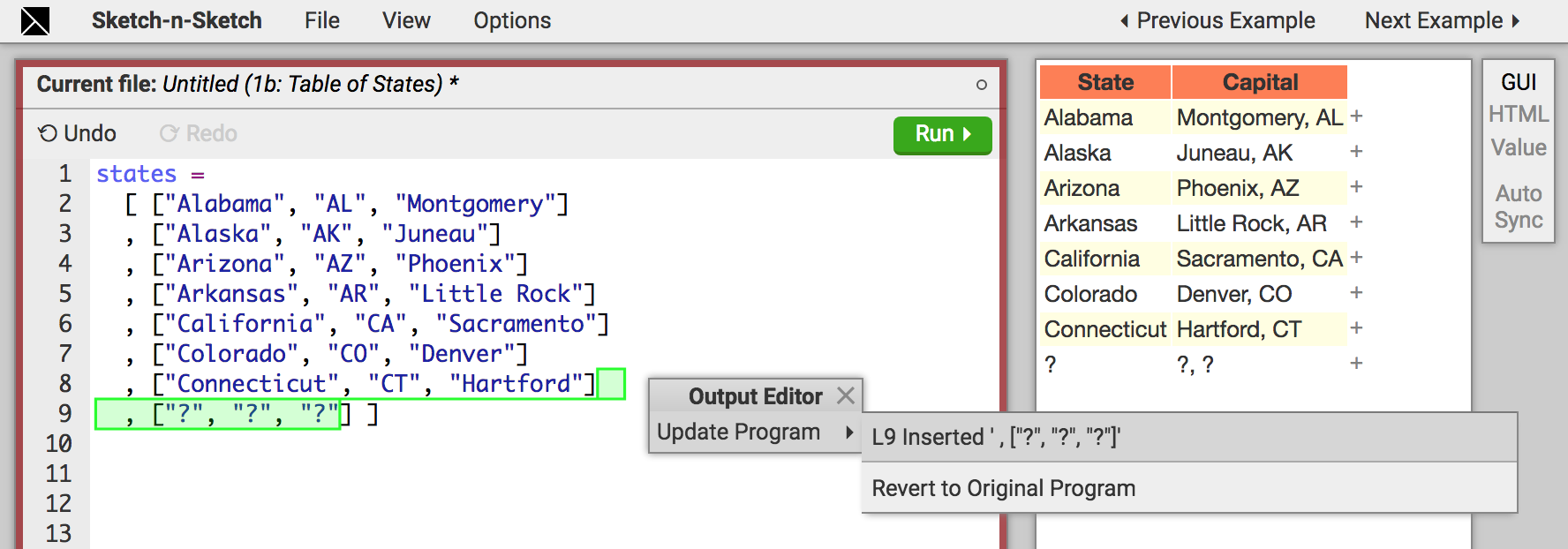}
\end{center}
\caption{Custom Add Row Button.
(0) Initially, the last table row is \expOutputStr{Connecticut}.
(1) The user clicks the \expOutputStr{+} button to the right of the
\expOutputStr{Connecticut} row.
(2) The editor displays a pop-up menu.
(3) The user hovers \valuePropToolName{}, resulting in one
program repair.
(4) The user hovers over the option, which previews a new program that
generates an additional placeholder row after \expOutputStr{Connecticut}.
The screenshot captures this last step.
}
\label{fig:overview-add-row}
\end{figure}

What about adding a new row with columns \expOutputStr{Delaware} and
\expOutputStr{Dover, DE} to the bottom of the output table?
The desired program repair is to add a new three-element list
$\expListThree{\expStr{Delaware}}{\expStr{DE}}{\expStr{Dover}}$.
to the end of the \verb+states+ list.
As we will explain in \autoref{sec:leo}, our algorithm does not
automatically perform the complex reasoning required---pushing
the newly inserted value \expListTwo{\expStr{DE}}{\expStr{Dover, DE}}
back through the call to \verb+List.map+ on line 13 of
\autoref{fig:overview-text-edit-AZ}---to produce the desired program repair.

\parahead{User-Defined Program Updates with Lenses}

For situations like this,
\snsLeo{} provides users (or library writers) a
mechanism to define a custom \emph{lens} that augments a
``bare'' function with a second \verb+update+ function that defines the
``reverse semantics'' for the bare function.

Using lenses, the expert user can define a module called
\verb+TableWithButtons+---which performs more advanced evaluation update
than for basic \verb|List.map|---to serve as a
drop-in replacement for the basic table-constructing functions in the
\verb+Html+ library.
\autoref{fig:overview-add-row} shows how, using this library, the user can click a
button (labeled \expOutputStr{+}) to indicate that a new row should be added
at that position.
Suppose the user clicks the button next to the \expOutputStr{Connecticut} row, hovers
\valuePropToolName{}, and then hovers the single solution (as shown in the
screenshot).
The resulting program
adds a dummy row on line 9 in the
\verb+states+ list, which can later be filled in through the basic direct
manipulation text interactions as before.
Thus, by using lenses to augment the functionality of the built-in
update algorithm, expert users and library writers can implement custom user
interface features for manipulating the particular bidirectional functional
documents under construction.

\section{\littleLeo{}: A Bidirectional Functional Programming Language}
\label{sec:leo}

In this section, we present the syntax and bidirectional evaluation semantics
for \littleLeo{}, a lambda-calculus that models the
\leo{} language supported in our implementation.\footnote{
\littleLeo{} is a backronym for ``a \underline{Little}
language extended with \underline{L}enses, \underline{E}val,
\underline{O}bjects, and other features.''}

\parahead{Syntax}

\begin{figure}[t]

\small

\centering

\figSyntaxBegin

\figSyntaxRowLabel{Expressions}{\varExpBare}
  \varConst
  \figSyntaxSpaceItem
  \varVar
  \figSyntaxSpaceItem
  \expFun{\varPat}{\varExp}
  \figSyntaxSpaceItem
  \expApp{\varExp_1}{\varExp_2}
  \figSyntaxSpaceItem
  \expCons{\varExp_1}{\varExp_2}
  \figSyntaxSpaceItem
  \expRecdExtend{\varExp}{\varField}{\varExp_{\varField}}
  \figSyntaxSpaceItem
  \expProjField{\varExp}{\varField}
\figSyntaxLineBreak
\figSyntaxRow
  \expLet{\varPat}{\varExp_1}{\varExp_2}
  \figSyntaxSpaceItem
  \expLetRec{\varPat}{\varExp_1}{\varExp_2}
  \figSyntaxSpaceItem
  \expIte{\varExp_1}{\varExp_2}{\varExp_3}
  \figSyntaxSpaceItem
  \expCaseTwo{\varExp}{\expBranchWithId{\varExpId}{1}{\varPat_1}{\varExp_1}}{\cdots}
\figSyntaxLineBreak
\figSyntaxRow
  \expFreeze{\varExp}
  \figSyntaxSpaceItem
  \expApplyLens{\varExp_1}{\varExp_2}
\figSyntaxSpaceNextCategory
\figSyntaxRowLabel{Constants}{\varConst}
  \varNum
  \figSyntaxSpaceItem
  \varBool
  \figSyntaxSpaceItem
  \varStr
  \figSyntaxSpaceItem
  \valNil
  \figSyntaxSpaceItem
  \valEmptyRecd
  \figSyntaxSpaceItem
  \op{(+)}
  \figSyntaxSpaceItem
  \op{(*)}
  \figSyntaxSpaceItem
  \op{(++)}
  \figSyntaxSpaceItem
  \op{(\&\&)}
  \figSyntaxSpaceItem
  \op{not}
  \figSyntaxSpaceItem
  \cdots
\figSyntaxLineBreak
\figSyntaxRow
  \op{updateApp}
  \figSyntaxSpaceItem
  \op{diff}
  \figSyntaxSpaceItem
  \op{merge}
\figSyntaxSpaceNextCategory
\figSyntaxRowLabel{Patterns}{\varPatBare}
  \varConst
  \figSyntaxSpaceItem
  \varVar
  \figSyntaxSpaceItem
  \expCons{\varPat_1}{\varPat_2}
  \figSyntaxSpaceItem
  \expRecdDots{\varField_1}{\varPat_1}

\figSyntaxSpaceNextCategory
\\

\figSyntaxRowLabel{Environments}{\varEnv}
  \emptyEnv
  \figSyntaxSpaceItem
  \envCat{\varEnv}{\envBind{\varPat}{\varVal}}
\figSyntaxSpaceNextCategory
\figSyntaxRowLabel{Values}{\varVal} 
  \varConst
  \figSyntaxSpaceItem
  \closure{\varEnv}{\varPat}{\varExp}
  \figSyntaxSpaceItem
  \expListTwo{\varVal_1}{\cdots}
  \figSyntaxSpaceItem
  \expRecdDots{\varField_1}{\varVal_1}
\figSyntaxEnd
\caption{Syntax of \littleLeo{}.}
\label{fig:syntax}
\end{figure}

The syntax of \littleLeo{} is defined in \autoref{fig:syntax}.
Constants include numbers $\varNum$, booleans $\varBool$, strings $\varStr$, the
empty list $\valNil$, the empty record $\valEmptyRecd$, and primitive operators.
The operators \op{updateApp}, \op{diff}, and \op{merge}
facilitate the definition of custom lenses, which will be presented after
basic evaluation update.
Values $\varVal$ include constants, closures
$\closure{\varEnv}{\varPat}{\varExp}$ where the environment $\varEnv$ binds free
variables in the body of the function $\expFun{\varPat}{\varExp}$, and lists and
records with multiple components.

The definition of expression forms is spread across three lines:
those on the first two lines are standard---constants $\varConst$,
variables $\varVar$,
function application $\expApp{\varExp_1}{\varExp_2}$,
list construction $\expCons{\varExp_1}{\varExp_2}$,
record extension $\expRecdExtend{\varExp}{\varField}{\varExp_{\varField}}$,
record field projection $\expProj{\varExp}{\varField}$,
(simple and recursive) let-bindings $\expLet{\varVar}{\varExp_1}{\varExp_2}$,
conditionals $\expIte{\varExp_1}{\varExp_2}{\varExp_3}$, and
case expressions
$\expCaseTwo{\varExp}{\expBranchWithId{\varExpId}{1}{\varPat_1}{\varExp_1}}{\cdots}$;
and those on the third line are specific to the
definition of custom update functions
(discussed in \autoref{sec:leo-custom-update}).

\subsection{Bidirectional Evaluation}
\label{sec:leo-basic-update}

\begin{figure}[t]

\small

\centering

\judgementHeadCenter
  {Evaluation\phantom{p}}
  {$\reducesTo{(\varEnv}{\varExp)}{\varVal}$}
\hspace{0.35in}
\judgementHeadCenter
  {Evaluation Update\phantom{p}}
  {$\updatesTo{(\varEnv}{\varExp)}{\varVal'}{(\varEnv'}{\varExp')}$}

\vsepRule

$\inferrule*[right=\ruleNameFig{E-Const}]
  {
  }
  {\reducesTo{\varEnv}{\varConst}{\varConst}}
$
\hsepRule
$\inferrule*[right=\ruleNameFig{U-Const}]
  {
  }
  {\updatesTo{\varEnv}{\varConst}{\varConst'}{\varEnv}{\varConst'}}
$

\vsepRule

$\inferrule*[right=\ruleNameFig{E-Fun}]
  {
  }
  {\reducesTo
    {\varEnv}
    {\expFun{\varPat}{\varExp}}
    {\closure{\varEnv}{\varPat}{\varExp}}
  }
$
\hsepRule
$\inferrule*[right=\ruleNameFig{U-Fun}]
  {
  }
  {\updatesTo
    {\varEnv}
    {\expFun{\varPat}{\varExp}}
    {\closure{\varEnv'}{\varPat}{\varExp'}}
    {\varEnv'}
    {\expFun{\varPat}{\varExp'}}
  }
$

\vsepRule

$\inferrule*[right=\ruleNameFig{E-Var}]
  {
   \varEnv=\envCatThree{\varEnv_1}{\envBind{\varVar}{\varVal}}{\varEnv_2}
   \sepPremise
   \varVar\notin\dom{\varEnv_2}
  }
  {\reducesTo{\varEnv}{\varVar}{\varVal}}
$
\hsepRule
$\inferrule*[right=\ruleNameFig{U-Var}]
  {
   \varEnv=\envCatThree{\varEnv_1}{\envBind{\varVar}{\varVal}}{\varEnv_2}
   \sepPremise
   \varVar\notin\dom{\varEnv_2}
  }
  {\updatesTo{\varEnv}{\varVar}{\varVal'}
    {(\envCatThree{\varEnv_1}{\envBind{\varVar}{\varVal'}}{\varEnv_2})}
    {\varVar}}
$

\vsepRule

$\inferrule*[right=\ruleNameFig{E-Let}]
  {
  {
  \startAlignedPremises{}
   \reducesToAligned
     {\varEnv}
     {\varExp_1}
     {\varVal_1}
   \\
   \reducesToAligned
    {\envCat{\varEnv}{\envBind{\varVar}{\varVal_1}}}
    {\varExp_2}
    {\varVal_2}
  \stopAlignedPremises{}
  }
  }
  {\reducesTo
    {\varEnv}
    {\expLet{\varVar}{\varExp_1}{\varExp_2}}
    {\varVal_2}
  }
$
\hspace{0.12in} 
%
$\inferrule*[right=\ruleNameFig{U-Let}]
  {
  {
  \startAlignedPremises{}
   \reducesToAligned
     {\varEnv}
     {\varExp_1}
     {\varVal_1}
   \\
   \updatesToAligned
    {\envCat{\varEnv}{\envBind{\varVar}{\varVal_1}}}
    {\varExp_2}
    {\varVal_2'}
    {\envCat{\varEnv_2}{\envBind{\varVar}{\varVal_1'}}}
    {\varExp_2'}
  \stopAlignedPremises{}
  }
  \hspace{0.13in}
  {
  \startAlignedPremises{}
   \updatesToAligned
    {\varEnv}
    {\varExp_1}
    {\varVal_1'}
    {\varEnv_1}
    {\varExp_1'}
   \\
   \equalsAligned{\varEnv'}{\mergeEnvsAllArgs{\varEnv}{\varEnv_1}{\varEnv_2}{\varExp_1}{\varExp_2}}
  \stopAlignedPremises{}
  }
  }
  {\updatesTo
    {\varEnv}
    {\expLet{\varVar}{\varExp_1}{\varExp_2}}
    {\varVal_2'}
    {\varEnv'}
    {\expLet{\varVar}{\varExp_1'}{\varExp_2'}}
  }
$

\vsepRule

$\inferrule*[right=\ruleNameFig{E-App}]
  {
   \reducesTo{\varEnv}{\varExp_1}{\closure{\varEnv_f}{\varVar}{\varExp_f}}
   \sepPremise
   \reducesTo{\varEnv}{\varExp_2}{\varVal_2}
   \sepPremise
   \reducesTo{\envCat{\varEnv_f}{\envBind{\varVar}{\varVal_2}}}{\varExp_f}{\varVal}
  }
  {\reducesTo
    {\varEnv}
    {\expApp{\varExp_1}{\varExp_2}}
    {\varVal}
  }
$

\vsepRule

$\inferrule*[right=\ruleNameFig{U-App}]
  {
  {
  \startAlignedPremises{}
   \reducesToAligned{\varEnv}{\varExp_1}{\closure{\varEnv_f}{\varVar}{\varExp_f}}
   \\
   \reducesToAligned{\varEnv}{\varExp_2}{\varVal_2}
   \\
   \updatesToAligned
     {\envCat{\varEnv_f}{\envBind{\varVar}{\varVal_2}}}
     {\varExp_f}
     {\varVal'}
     {\envCat{\varEnv_f'}{\envBind{\varVar}{\varVal_2'}}}
     {\varExp_f'}
  \stopAlignedPremises{}
  }
  \\
  {
  \startAlignedPremises{}
   \updatesToAligned
     {\varEnv}
     {\varExp_1}
     {\closure{\varEnv_f'}{\varVar}{\varExp_f'}}
     {\varEnv_1}
     {\varExp_1'}
   \\
   \updatesToAligned
     {\varEnv}
     {\varExp_2}
     {\varVal_2'}
     {\varEnv_2}
     {\varExp_2'}
   \\
   \equalsAligned{\varEnv'}{\mergeEnvsAllArgs{\varEnv}{\varEnv_1}{\varEnv_2}{\varExp_1}{\varExp_2}}
  \stopAlignedPremises{}
  }
  }
  {\updatesTo
    {\varEnv}
    {\expApp{\varExp_1}{\varExp_2}}
    {\varVal'}
    {\varEnv'}
    {\expApp{\varExp_1'}{\varExp_2'}}
  }
$

\vsepRule

$\inferrule*[right=\ruleNameFig{E-If-True}]
  {
  {
  \startAlignedPremises{}
   \reducesToAligned
     {\varEnv}
     {\varExp_1}
     {\expTrue}
   \\
   \reducesToAligned
     {\varEnv}
     {\varExp_2}
     {\varVal}
  \stopAlignedPremises{}
  }
  }
  {\reducesTo
    {\varEnv}
    {\expIte{\varExp_1}{\varExp_2}{\varExp_3}}
    {\varVal}
  }
$
\hsepRule
$\inferrule*[right=\ruleNameFig{U-If-True}]
  {
  {
  \startAlignedPremises{}
     \reducesToAligned{\varEnv}{\varExp_1}{\expTrue} \\
     \updatesToAligned{\varEnv}{\varExp_2}{\varVal'}{\varEnv_2}{\varExp_2'}
  \stopAlignedPremises{}
  }
  \\
  \varEnv' = \mergeEnvsAllArgs{\varEnv}{\varEnv}{\varEnv_2}{\varExp_1}{\varExp_2}
  }
  {\updatesTo
    {\varEnv}
    {\expIte{\varExp_1}{\varExp_2}{\varExp_3}}
    {\varVal'}
    {\varEnv'}
    {\expIte{\varExp_1}{\varExp_2'}{\varExp_3}}
  }
$

\vsepRule

$\inferrule*[right=\ruleNameFig{E-Freeze}]
  {\reducesTo{\varEnv}{\varExp}{\varVal}}
  {\reducesTo{\varEnv}{\expFreeze{\varExp}}{\varVal}}
$
\hsepRule
$\inferrule*[right=\ruleNameFig{U-Freeze}]
  {
   \reducesTo{\varEnv}{\varExp}{\varVal}
  }
  {\updatesTo{\varEnv}{\expFreeze{\varExp}}{\varVal}{\varEnv}{\expFreeze{\varExp}}}
$

\caption{Bidirectional Evaluation for \littleLeo{} (selected rules).}
\label{fig:eval-update}
\end{figure}

\autoref{fig:eval-update}, \autoref{fig:update-numbers}, and
\autoref{fig:update-cons} define the \emph{bidirectional evaluation} semantics
for \littleLeo{}; the big-step evaluation rules (prefixed \ruleName{``E-''}) are standard,
while the \emph{evaluation update} (or simply, \emph{update}) rules (prefixed
\ruleName{``U-''}) are novel.\footnote{
By analogy to \emph{bidirectional
type checking}~\cite{PierceTurner,Chlipala:2005da},
evaluation may be thought of as ``value synthesis'' and evaluation update as
``value checking.''}
We refer to an environment-expression pair $\program{\varEnv}{\varExp}$ as a
\emph{program}.
%

The evaluation update judgement
$\updatesTo{(\varEnv}{\varExp)}{\varVal'}{(\varEnv'}{\varExp')}$
states that ``when updating its output value to $\varVal'$, the program
$\program{\varEnv}{\varExp}$ updates to $\program{\varEnv'}{\varExp'}$.''
%
The parentheses are for clarity; we typically omit them.
When only the expression (resp. environment) changes, we sometimes say ``the
expression (resp. environment) updates to a new expression (resp.
environment).''
We sometimes say ``push $\varVal'$ (or changes to $\varVal$) back to $\varExp$''
in reference to evaluation update.
%
%
The judgement does not refer to the original value $\varVal$
from forward-evaluation; if needed by a premise of an update rule, it is
re-computed.
Update rules come in three varieties:
\emph{replacement rules} overwrite values (base constants and closures)
in the program with new ones;
%
\emph{primitive rules} define how to update operations on base values, lists,
and dictionaries; and
\emph{propagation rules} carry the effects of replacement and primitive
rules to the rest of the program (through variables, function calls,
conditionals, \etc{}).


\subsubsection{Replacement Rules}



There are two axioms, for replacing values.
The rule \ruleName{U-Const} says that, when updating its output value to
$\varConst'$, the expression $\varConst$ updates to $\varConst'$; the
environment $\varEnv$ remains unchanged.
The rule \ruleName{U-Fun} says that, when updating its output value to the
closure $\closure{\varEnv'}{\varPat}{\varExp'}$, the program
$\program{\varEnv}{\expFun{\varPat}{\varExp}}$ updates to
$\program{\varEnv'}{\expFun{\varPat}{\varExp'}}$.
%
%
Although directly updating closures in the output of a program will,
perhaps, be less common than other types of values, this rule is nevertheless
crucial for ``receiving'' changes propagated from elsewhere in the program.
%



\subsubsection{Primitive Rules for Base Values}
\label{sec:prim-rules-base}

How to update primitive operations may vary in different deployments
of bidirectional evaluation.
The replacement and propagation rules are agnostic to these choices.
\autoref{fig:update-numbers} shows several simple primitive rules,
which we find useful in practice.
%
%
For example,
there are two update rules, \ruleName{U-Plus-1} and \ruleName{U-Plus-2}, which,
respectively, re-evaluate the left or right operand ($\varExp_1$ or $\varExp_2$)
to a number ($\varNum_1$ or $\varNum_2)$ and then push back the updated
difference ($\varNum'-\varNum_1$ or $\varNum'-\varNum_2$) entirely to that
operand.
%
%
Because there are two update rules, there are two valid program updates for
addition expressions.
%
%



\subsubsection{Propagation Rules}

\begin{figure}[t]

\newcommand{\smallerHSepRule}
  {\hspace{0.12in}} 

\small

\centering

$\inferrule*[lab=\ruleNameFig{U-Plus-1}]
  {
  {
  \startAlignedPremises{}
   \reducesToAligned{\varEnv}{\varExp_1}{\varNum_1}
   \\
   \updatesToAligned
     {\varEnv}
     {\varExp_1}
     {\varNum' - \varNum_1}
     {\varEnv_1}
     {\varExp_1'}
  \stopAlignedPremises{}
  }
  }
  {\updatesTo
    {\varEnv}
    {\expPlus{\varExp_1}{\varExp_2}}
    {\varNum'}
    {\varEnv_1}
    {\expPlus{\varExp_1'}{\varExp_2}}
  }
$
\smallerHSepRule
$\inferrule*[lab=\ruleNameFig{U-Plus-2}]
  {
  {
  \startAlignedPremises{}
   \reducesToAligned{\varEnv}{\varExp_2}{\varNum_2}
   \\
   \updatesToAligned
     {\varEnv}
     {\varExp_2}
     {\varNum' - \varNum_2}
     {\varEnv_2}
     {\varExp_2'}
  \stopAlignedPremises{}
  }
  }
  {\updatesTo
    {\varEnv}
    {\expPlus{\varExp_1}{\varExp_2}}
    {\varNum'}
    {\varEnv_2}
    {\expPlus{\varExp_1}{\varExp_2'}}
  }
$
\smallerHSepRule
$\inferrule*[lab=\ruleNameFig{U-Lt}]
  {
   \reducesTo{\varEnv}
             {\expBinop{\op{<}}{\varExp_1}{\varExp_2}}
             {\varBool}
  }
  {\updatesTo
    {\varEnv}
    {\expBinop{\op{<}}{\varExp_1}{\varExp_2}}
    {\lnot\varBool}
    {\varEnv}
    {\expBinop{\op{>=}}{\varExp_1}{\varExp_2}}
  }
$

\vsepRule

$\inferrule*[lab=\ruleNameFig{U-And-True}]
  {
  {
  \startAlignedPremises{}
   \updatesToAligned{\varEnv}{\varExp_1}{\expTrue}{\varEnv_1}{\varExp_1'}
   \\
   \updatesToAligned{\varEnv}{\varExp_2}{\expTrue}{\varEnv_2}{\varExp_2'}
  \stopAlignedPremises{}
  }
   \sepPremise
   \varEnv' = \mergeEnvsAllArgs{\varEnv}{\varEnv_1}{\varEnv_2}{\varExp_1}{\varExp_2}
  }
  {\updatesTo
    {\varEnv}
    {\expBinop{\op{\&\&}}{\varExp_1}{\varExp_2}}
    {\expTrue}
    {\varEnv'}
    {\expBinop{\op{\&\&}}{\varExp_1'}{\varExp_2'}}
  }
$
\hsepRule
$\inferrule*[lab=\ruleNameFig{U-And-False-1}]
  {
   \updatesTo{\varEnv}{\varExp_1}{\expFalse}{\varEnv_1}{\varExp_1'}
  }
  {\updatesTo
    {\varEnv}
    {\expBinop{\op{\&\&}}{\varExp_1}{\varExp_2}}
    {\expFalse}
    {\varEnv_1}
    {\expBinop{\op{\&\&}}{\varExp_1'}{\varExp_2}}
  }
$



\caption{Update for Primitive Operations (selected rules).}
\label{fig:update-numbers}
\end{figure}

How to update variables and variable binding forms is the heart of our
formulation; they are what allow changes to values at the leaves of
the program to flow throughout the program.
To simplify the presentation, the evaluation and update rules for let-bindings
and function calls assume only variable patterns rather than arbitrary ones, as in
our implementation.

\parahead{Variables}

When \ruleName{U-Var} updates the output of $\varVar$ to
$\varVal'$, the environment $\varEnv$ updates to $\varEnv'$ which is like the
original except that $\varVar$ is bound to the new value $\varVal'$; the
expression $\varVar$ remains unchanged.

\parahead{Let-Bindings}

Like \ruleName{E-Let}, the first premise of \ruleName{U-Let} evaluates the
expression $\varExp_1$ to a value $\varVal_1$, to be bound to $\varVar$
and added to $\varEnv$ when subsequently considering the expression body
$\varExp_2$.
Dual to \ruleName{E-Let}---which, in the new environment, evaluates $\varExp_2$ to a
value $\varVal_2$---the second premise of \ruleName{U-Let} pushes back the
expected value $\varVal_2'$, producing a
(potentially updated) expression body $\varExp_2'$ and
(potentially updated) environment
$\envCat{\varEnv_2}{\envBind{\varVar}{\varVal_1'}}$
that is structurally equivalent to the original environment
$\envCat{\varEnv}{\envBind{\varVar}{\varVal_1}}$
(their domains are equal).
Notice that, to produce this new program, the value $\varVal_1'$
bound to $\varVar$ may differ from the original value $\varVal_1$.
To discharge this obligation, the third premise pushes $\varVal_1'$
back to $\varExp_1$, producing a
(potentially updated) expression $\varExp_1'$ and
(potentially updated) environment $\varEnv_1$ that is structurally
equivalent to $\varEnv$.
Putting these pieces together, the new expression is
$\expLet{\varVar}{\varExp_1'}{\varExp_2'}$.
What remains is to reconcile $\varEnv_1$ and $\varEnv_2$ with the original
$\varEnv$.
The rules ensure that $\varEnv_1$ and $\varEnv_2$ are both structurally equivalent
to $\varEnv$, but each may have induced updates to one or more bindings in
$\varEnv$.
As demonstrated below, updated bindings may conflict---there
may be variables $\varVarY$ such that
$\varEnv_1(\varVarY)$ and $\varEnv_2(\varVarY)$ are different.
To produce the final environment, the conclusion of \ruleName{U-Let} uses an
environment merge operation,
$\mergeEnvsAllArgs{\varEnv}{\varEnv_1}{\varEnv_2}{\varExp_1}{\varExp_2}$,
discussed next.

\newcommand{\paraMerge}{Environment, Value, and Expression Merge}
\parahead{\paraMerge}

The merge operation---used by all rules that update multiple
subexpressions---takes two structurally equivalent environments $\varEnv_1$ and $\varEnv_2$ to merge,
as well as several optional arguments: the
original environment $\varEnv$, and the expressions $\varExp_1$ and $\varExp_2$
that were updated when $\varEnv_1$ and $\varEnv_2$ were produced.
We consider two implementations for this operation.

\begin{definition}[Conservative Two-Way Merge]

\emph{Two-way environment merge}
$\mergeEnvsTwoWay{\varEnv_1}{\varEnv_2}{\varExp_1}{\varExp_2}$,
reconciles bindings as follows (the subscript $\varEnv$ is omitted
because the original environment is ignored).
\begin{align*}
\mergeEnvsTwoWay
  {(\envCat{\varEnv_1}{\envBind{\varVar}{\varVal_1}})}
  {(\envCat{\varEnv_2}{\envBind{\varVar}{\varVal_2}})}
  {\varExp_1}
  {\varExp_2} &=
  {(\envCat{\varEnv'}{\envBind{\varVar}{\varVal}})}
  \textrm{ where } \varEnv' = \mergeEnvsTwoWay{\varEnv_1}{\varEnv_2}{\varExp_1}{\varExp_2}
  \textrm{ and } \varVal =
    \left\{
      \begin{array}{ll}
        \varVal_1 & \mbox{if } \varVal_1 = \varVal_2 \\
        \varVal_1 & \mbox{if } \varVar \notin \freeVars{\varExp_2} \\
        \varVal_2 & \mbox{if } \varVar \notin \freeVars{\varExp_1}
      \end{array}
    \right.
\end{align*}
\noindent
If neither environment updates the $\varVar$ binding, or if both update it
to the same new value, the first equation adds the new value to the
merged environment.
Otherwise, the second (resp. third) equation adds the new value
$\varVal_1$ from the left environment $\varEnv_1$ (resp. $\varVal_2$
from the right environment $\varEnv_2$) only if $\varVar$ does not
appear free in $\varExp_2$ (resp. $\varExp_1$).

Two-way merge is conservative; if a variable appears free in both expressions,
two-way merge requires that it be updated to the same new value in both
environments.
In other words, \emph{all} occurrences of an updated variable must be updated
consistently for the overall update to succeed.
Though restrictive, this would be necessary to ensure that the new program
evaluates to the updated value;
\autoref{sec:correctness} describes a correctness property that
depends on the use of two-way merge.
In practice, however, we prefer to support interactions where the
user can update only a subset of the uses---often just one---of a
particular variable.
To support this workflow, we propose an alternative way to merge
environments.

\end{definition}

\begin{definition}[Optimistic Three-Way Merge]

\emph{Three-way environment merge}
$\combineEnvs{\varEnv}{\varEnv_1}{\varEnv_2}$
performs \emph{three-way value merge} on variable bindings as follows
(the superscripts $\varExp_1$ and $\varExp_2$ are omitted
because these expressions are ignored).
\begin{align*}
\combineEnvs
  {(\envCat{\varEnv}{\envBind{\varVar}{\varVal}})}
  {(\envCat{\varEnv_1}{\envBind{\varVar}{\varVal_1}})}
  {(\envCat{\varEnv_2}{\envBind{\varVar}{\varVal_2}})} &=
  ({\envCat{\varEnv'}{\envBind{\varVar}{(\mergeVals{\varVal}{\varVal_1}{\varVal_2})}}})
  \textrm{ where } \varEnv' = \combineEnvs{\varEnv}{\varEnv_1}{\varEnv_2}
\end{align*}
The value merge operation $\mergeVals{\varVal}{\varVal_1}{\varVal_2}$
(not shown) recursively traverses the subvalues of three structurally
equivalent values, until the rule for base cases---for merging
constants---chooses $\varVal_2$ if it differs from $\varVal$ (even if
$\varVal_2$ and $\varVal_1$ conflict) and $\varVal_1$ otherwise.\footnote{
An implementation could, instead, choose to favor updates from the left,
or even propagate all combinations of choices.
}
Closure values include expressions, so we also define a \emph{three-way
expression merge} operation $\mergeExps{\varExp}{\varExp_1}{\varExp_2}$
(not shown) in similar fashion.
Three-way merge is optimistic; it succeeds despite conflicts.
As a result, use of three-way merge precludes the correctness property
enjoyed by the use of two-way merge.
Nevertheless, we expect three-way merge to be the default mode of use
in practice and choose it for our implementation
(\autoref{sec:implementation}).
In our experience, we find that desirable updates are often
produced by this configuration (\autoref{sec:evaluation}).

\end{definition}

\begin{example}

Consider the expression
$\expLetLong{\varVar}{\num{1}}{\expListTwo{\varVar}{\varVar}}$,
which evaluates to $\expListTwo{\num{1}}{\num{1}}$.
(The rule \ruleName{U-Cons} for updating lists, discussed
in detail below, updates the list element expressions and then merges the
updated environments.)
If updated to, say,
$\expListTwo{\num{1}}{\num{2}}$ or $\expListTwo{\num{0}}{\num{2}}$, the \ruleName{U-Let},
\ruleName{U-Cons}, and \ruleName{U-Var} rules---together with
the right-biased, three-way merge---combine to update the
program to $\expLetLong{\varVar}{\num{2}}{\expListTwo{\varVar}{\varVar}}$,
which, when evaluated, produces
$\expListTwo{\num{2}}{\num{2}}$.
Using two-way merge, however, update fails to produce a solution.

\end{example}

\parahead{Function Application}

Having seen the basic approach to propagating changes through
environments, we now turn to the rule \ruleName{U-App} for function
application (\autoref{fig:eval-update}).\footnote{
Consistent with the standard rewriting of
$\expLet{\varVar}{\varExp_1}{\varExp_2}$ to
$\expApp{(\expFun{\varVar}{\varExp_2})}{\varExp_1}$,
\ruleName{\footnotesize U-Let} is derivable from
\ruleName{\footnotesize U-Fun} and
\ruleName{\footnotesize U-App}.
}
The approach is like that of \ruleName{U-Let}, extended now to also
deal with the closure through which a value flows.
The first two premises re-evaluate the function $\varExp_1$ to a closure
$\closure{\varEnv_f}{\varVar}{\varExp_f}$ and the argument $\varExp_2$ to a
value $\varVal_2$.
The third premise pushes the updated value $\varVal'$ back through the
function call, specifically, through the function body $\varExp_f$, where the
closure environment is extended with the binding $\envBind{\varVar}{\varVal_2}$.
This produces a new function body $\varExp_f'$ and a new, structurally
equivalent environment $\envCat{\varEnv_f'}{\envBind{\varVar}{\varVal_2}'}$
with a new argument value $\varVal_2'$.
Three terms must be reconciled with the original program:
the function environment ($\varEnv_f'$),
the function body ($\varExp_f'$), and
the function argument ($\varVal_2'$).
The fourth premise pushes the first and second terms, in the form of the closure
$\closure{\varEnv_f'}{\varVar}{\varExp_f'}$, back to the original function
expression $\varExp_1$; the result is a new program
$\program{\varEnv_1}{\varExp_1'}$.
The fifth premise pushes the third term, $\varVal_2'$, back to the
original argument expression $\varExp_2$; the result is a new program
$\program{\varEnv_2}{\varExp_2'}$.
Environment merge is used to combine $\varEnv_1$ and
$\varEnv_2$, and the updated function application expression is
$\expApp{\varExp_1'}{\varExp_2'}$.

\parahead{Control-Flow}

Rule \ruleName{U-If-True} (\autoref{fig:eval-update}) pushes the
updated value back to the then-branch,
assuming that the same branch will be taken by the new program.
Rule \ruleName{U-If-False} (not shown) does the same for the else-branch.
Notice that new environment $\varEnv'$ is the result of merging the
original environment $\varEnv$ with the environment $\varEnv_2$ produced
when updating the then-branch.
The conservative two-way merge would ensure that all free variables of $\varExp_1$
must be bound to the same values in $\varEnv'$ as in $\varEnv$; thus,
the guard expression in the new program would evaluate to the same boolean value.
As discussed above, however, our implementation is configured to use
three-way merge.
Indeed, several of our example use cases for direct manipulation
interaction (\autoref{sec:evaluation}) purposely alter control-flow, \eg{}~because of a change to a boolean flag.

\begin{example}

Consider the expression
$\expApp
  {(\expFun{\varVar}
     {\expIteLong
        {\expBinop{==}{\varVar}{\num{1}}}
        {\varVar}
        {\num{3}}})}
  {\num{1}}$
which evaluates to $\num{1}$.
If the user updates the value to $\num{2}$, the change will be pushed back to
the then-branch, and then back through the variable use to the
function argument.
If using two-way merge, the update would fail because the updated variable,
$\varVar$, is free in the guard expression.
If using three-way merge, the updated expression would be
$\expApp
  {(\expFun{\varVar}
     {\expIteLong
        {\expBinop{==}{\varVar}{\num{1}}}
        {\varVar}
        {\num{3}}})}
  {\num{2}}$,
which, when evaluated, takes the else-branch and produces $\num{3}$
instead of $\num{2}$.

\end{example}

\parahead{Expression Freeze}

The update rules are applied ``automatically'' to all relevant (sub)expressions
when trying to reconcile the program with a new output value.
The $\expFreeze{\varExp}$ expression is semantically a no-op
(\ruleName{E-Freeze} in \autoref{fig:eval-update}) but is one
simple way to control the update algorithm, by requiring that the expression
$\varExp$ and values $\varVal$ it computes remain unaltered (\ruleName{U-Freeze}
in \autoref{fig:eval-update}).

\subsubsection{Primitive Rules for Lists and Dictionaries}
\label{sec:prim-rules-lists}

We have considered replacement, primitive, and propagation rules for a
core language with base values.
We now discuss primitive rules for updating data structures in
\littleLeo, namely, lists and dictionaries.

\parahead{List Construction}

\begin{figure}[t]

\small

\centering

$\inferrule*[right=\ruleNameFig{E-Cons}]
  {
  {
  \startAlignedPremises{}
   \reducesToAligned{\varEnv}{\varExp_1}{\varVal_1}
   \\
   \reducesToAligned{\varEnv}{\varExp_2}{\expListTwo{\varVal_2}{\cdots}}
  \stopAlignedPremises{}
  }
  }
  {\reducesTo
    {\varEnv}
    {\expCons{\varExp_1}{\varExp_2}}
    {\expListTwo{\varVal_1,\varVal_2}{\cdots}}
  }
$
\hsepRule
$\inferrule*[right=\ruleNameFig{U-Cons}]
  {
  {
  \startAlignedPremises{}
   \updatesToAligned{\varEnv}{\varExp_1}{\varVal_1'}{\varEnv_1}{\varExp_1'}
   \\
   \updatesToAligned{\varEnv}{\varExp_2}{\expListTwo{\varVal_2'}{\cdots}}{\varEnv_2}{\varExp_2'}
  \stopAlignedPremises{}
  }
   \sepPremise
   \varEnv' = \mergeEnvsAllArgs{\varEnv}{\varEnv_1}{\varEnv_2}{\varExp_1}{\varExp_2}
  }
  {\updatesTo
    {\varEnv}
    {\expCons{\varExp_1}{\varExp_2}}
    {\expListTwo{\varVal_1',\varVal_2'}{\cdots}}
    {\varEnv'}
    {\expCons{\varExp_1'}{\varExp_2'}}
  }
$

\vsepRule

$
\inferrule*[right=\ruleNameFig{U-List}]
  {\reducesTo
    {\varEnv}
    {\expList{\varExp_1}{\cdots}{\varExp_n}}
    {\varVal}
   \\
   \listDiff{}=\computeListDiff{\varVal}{\varVal'}
   \\
   \foldListDiffEquals
     {\varEnv}
     {\expList{\varExp_1}{\cdots}{\varExp_n}}
     {\listDiff{}}
     {\varEnv'}{\varExp'}
  }
  {\updatesTo
    {\varEnv}
    {\expList{\varExp_1}{\cdots}{\varExp_n}}
    {\varVal'}
    {\varEnv'}
    {\varExp'}
  }
$

\vsepRule

$\inferrule* 
  { }
  {\foldListDiffEquals
     {\varEnv}{\valNil}{\emptyDiff}
     {\varEnv}{\valNil}
  }
$
\hsepRule
$\inferrule* 
  {\foldListDiffEquals
        {\varEnv}
        {\varExp_2}
        {\listDiff}
        {\varEnv'}
        {\varExp_2'}}
  {\foldListDiffEquals
    {\varEnv}
    {\expCons{\varExp_1}{\varExp_2}}
    {\diffCons{\diffKeep}{\listDiff}}
    {\varEnv'}
    {\expCons{\varExp_1}{\varExp_2'}}}
$

\vsepRule

$\inferrule* 
  {\foldListDiffEquals
    {\varEnv}
    {\varExp_2}
    {\listDiff}
    {\varEnv'}
    {\varExp_2'}}
  {\foldListDiffEquals
    {\varEnv}
    {\expCons{\varExp_1}{\varExp_2}}
    {\diffCons{\diffDelete}{\listDiff}}
    {\varEnv'}
    {\varExp_2'}}
$
\hsepRule
$\inferrule* 
  {\foldListDiffEquals
    {\varEnv}
    {\varExp}
    {\listDiff}
    {\varEnv'}
    {\varExp'}}
  {\foldListDiffEquals
    {\varEnv}
    {\varExp}
    {\diffCons{\diffInsert{\varVal'}}{\listDiff}}
    {\varEnv'}
    {\expCons{\coerceToExp{\varVal'}}{\varExp'}}}
$

\vsepRule

$\inferrule* 
  {
   \updatesTo{\varEnv}{\varExp_1}{\varVal'}{\varEnv_1}{\varExp_1'}
   \\
   \foldListDiffEquals
     {\varEnv}
     {\varExp_2}
     {\listDiff}
     {\varEnv_2}
     {\varExp_2'}
   \\
   \varEnv' = \mergeEnvsAllArgs{\varEnv}{\varEnv_1}{\varEnv_2}{\varExp_1}{\varExp_2}
  }
  {\foldListDiffEquals
    {\varEnv}
    {\expCons{\varExp_1}{\varExp_2}}
    {\diffCons{\diffUpdate{\varVal'}}{\listDiff}}
    {\varEnv'}
    {\expCons{\varExp_1'}{\varExp_2'}}}
$

\caption{Evaluation and Update for Lists.}
\label{fig:update-cons}
\end{figure}

\autoref{fig:update-cons} shows a standard evaluation rule (\ruleName{E-Cons})
for list construction, and a corresponding update rule (\ruleName{U-Cons}) that
propagates changes to the head (resp. tail) value back to the head (resp. tail)
expression.
Notice that these rules preserve the structure of existing cons expressions;
we choose not to include structure-changing rules that add and remove cons
expressions because of the amount of ambiguity they would introduce.

\newcommand{\paraListLiterals}{List Literals: Pretty Local Updates}

\parahead{\paraListLiterals{}}

The primitive rules presented in \autoref{fig:update-numbers}
update expressions without altering their structure.
Even rules such as \ruleName{U-Lt}, which update the operator,
preserve the arity of the application.
This approach ensures a predictable class of ``small''
changes, but the same restriction applied to data structures would preclude
seemingly benign changes---\eg{}~updating the empty list expression $\valNil{}$ with
new value $\expListOne{\num{1}}$.

Our design includes the rule \ruleName{U-List}
(\autoref{fig:update-cons}) to allow insertion and deletion inside list literals
that appear in the program---we refer to this form of structural change as
\emph{pretty local} to emphasize its limited effect on the program structure.
We write $\expList{\varExp_1}{\cdots}{\varExp_n}$ as syntactic sugar for the
nested list construction expression
{\expCons{\varExp_1}{\expCons{\cdots}{\expCons{\varExp_n}{\valNil}}}} that
terminates with the empty list.
The helper procedure $\computeListDiff{\varVal}{\varVal'}$ takes the original
and updated list values and computes a \emph{value difference} $\listDiff$ (a
``delta''), in this case, a sequence of \emph{list difference
operations}---$\diffKeep$, $\diffDelete$, $\diffInsert{\varVal'}$, or
$\diffUpdate{\varVal'}$.
Our implementation of $\helperop{Diff}$ uses a dynamic programming approach
that, intuitively, attempts to preserve as many contiguous sequences from the
original list as possible.
We reuse the syntax of the evaluation update judgement for one that pushes back
value differences (rather than just values),
with the subscript $\helperop{Diff}$ to help distinguish them:
$\foldListDiffEquals{\varEnv}{\expList{\varExp_1}{\cdots}{\varExp_n}}{\listDiff{}}{\varEnv'}{\varExp'}$
computes the list literal $\varExp'$ that results from traversing the original
list literal and the difference operations, keeping, inserting, deleting, or
updating expressions as dictated by the difference.

\parahead{String, Records, and Dictionaries}

Evaluation rules (not shown) for
string concatenation $\expConcat{\varExp_1}{\varExp_2}$,
record literals $\expRecdDots{\varField_1}{\varExp_1}$, and
record extension $\expRecdExtend{\varExp}{\varField}{\varExp_f}$ are standard.
Dictionary values are built using primitive operators \op{empty}, \op{get},
\op{insert}, \op{remove}, and \op{fromList}.
Update rules for dictionaries work much the same way as those for lists,
based on \emph{dictionary difference operations} that are analogous to the
list difference operations, discussed above.
Update rules for records and record extension are also similar,
except that there are no insertions or deletions.
Update rules for concatenating strings and appending lists require a more
nuanced approach, as explained in \autoref{sec:leo-custom-update}.

\subsubsection{Correctness}
\label{sec:correctness}

We now describe two correctness properties that relate evaluation and
evaluation update.
The first correctness property is straightforward.

\begin{theorem}[EvalUpdate] \label{prop:eval-update}
If $\reducesTo{\varEnv}{\varExp}{\varVal}$,
then $\updatesTo{\varEnv}{\varExp}{\varVal}{\varEnv}{\varExp}$.
(\ie{}~If the program $\program{\varEnv}{\varExp}$ evaluates to $\varVal$,
pushing the same value back to the program does not change the program.)
\end{theorem}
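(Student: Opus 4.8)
The plan is to prove the statement by induction on the derivation of the evaluation judgement $\reducesTo{\varEnv}{\varExp}{\varVal}$, exhibiting for each case a single update derivation whose output is the unchanged program $\program{\varEnv}{\varExp}$. Since evaluation update is a relation rather than a function, it suffices to construct one such derivation. I would induct on the evaluation derivation rather than on $\varExp$ itself, because \ruleName{E-Let} and \ruleName{E-App} evaluate subexpressions under extended environments; the corresponding subderivations are strict subderivations, so the induction hypothesis still applies to them directly.

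For the replacement and propagation cases the construction is immediate. In \ruleName{U-Const}, \ruleName{U-Var}, and \ruleName{U-Fun}, taking the target equal to the original value yields the identity by inspection (for \ruleName{U-Var}, rebinding $\varVar$ to $\varVal' = \varVal$ reconstructs the same environment), and \ruleName{U-Freeze} returns the program unchanged by definition. For \ruleName{U-Let} I re-run \ruleName{E-Let} to recover $\varVal_1$, apply the induction hypothesis to the body subderivation $\reducesTo{\envCat{\varEnv}{\envBind{\varVar}{\varVal_1}}}{\varExp_2}{\varVal_2}$ to obtain $\varEnv_2 = \varEnv$, $\varVal_1' = \varVal_1$, and $\varExp_2' = \varExp_2$, and then apply it to $\reducesTo{\varEnv}{\varExp_1}{\varVal_1}$ to obtain $\varEnv_1 = \varEnv$ and $\varExp_1' = \varExp_1$; the conclusion then reduces to checking that the merged environment $\mergeThree{\varEnv}{\varEnv}{\varEnv}$ equals $\varEnv$. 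The cases \ruleName{U-App}, \ruleName{U-If-True}/\ruleName{U-If-False}, and \ruleName{U-Cons} follow the same template: every recursive update premise is discharged by the induction hypothesis to an identity, and the only residual obligation is that merging $\varEnv$ with itself returns $\varEnv$. In \ruleName{U-App} there is the extra step of pushing the closure $\closure{\varEnv_f}{\varVar}{\varExp_f}$ back through $\varExp_1$; this is handled uniformly by the induction hypothesis applied to $\reducesTo{\varEnv}{\varExp_1}{\closure{\varEnv_f}{\varVar}{\varExp_f}}$, since \ruleName{U-Fun} leaves an unchanged closure untouched.

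The crux of the argument is therefore a merge-idempotence lemma: for every environment $\varEnv$, $\mergeThree{\varEnv}{\varEnv}{\varEnv} = \varEnv$, and, for the conservative configuration, $\mergeEnvsTwoWay{\varEnv}{\varEnv}{\varExp_1}{\varExp_2} = \varEnv$. I would prove this by induction on the structure of $\varEnv$. In the two-way case each common binding $\envBind{\varVar}{\varVal}$ falls under the first clause ($\varVal_1 = \varVal_2$), so $\varVal$ is retained. In the three-way case the task reduces to showing that value merge (and, for closures, expression merge) is reflexive, i.e.\ $\mergeVals{\varVal}{\varVal}{\varVal} = \varVal$; this follows by a mutual induction on values and expressions, the base case being the constant rule, which selects $\varVal_2$ only when it differs from $\varVal$—and here $\varVal_2 = \varVal$, so it falls through to $\varVal_1 = \varVal$.

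The step I expect to require the most care is the family of primitive-operation cases, which do not simply relay the target to subexpressions: for instance \ruleName{U-Plus-1}/\ruleName{U-Plus-2} push a computed residual to one operand. To invoke the induction hypothesis I must verify, operator by operator, that feeding back the \emph{unchanged} output value causes the rule to push each subexpression's own original value, so that the recursive premise becomes an identity; for operators whose displayed rule fires only on a changed value (such as the flip performed by \ruleName{U-Lt}), the identity case is instead discharged by that operator's unchanged-value rule among those elided from the figure. Establishing this local round-trip property for each primitive—together with the analogous check that $\helperop{Diff}(\varVal,\varVal)$ yields an all-$\diffKeep$ difference so that \ruleName{U-List} also reproduces the program—is the one part of the proof that is not purely mechanical, and it is precisely the design constraint the primitive rules must meet for the theorem to hold.
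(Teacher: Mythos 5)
Your proposal is correct and takes exactly the same route as the paper: the paper's entire proof of this theorem is the single sentence ``Straightforward induction on the evaluation derivation.'' Your elaboration---the merge-idempotence lemma for both two-way and three-way merge, and the per-operator round-trip check for the primitive rules (including the observation that identity updates for operators like \ruleName{U-Lt} must be discharged by unchanged-value rules elided from the figures)---supplies precisely the details that the paper's one-line proof leaves implicit.
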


Next, we define the notation
$\updatesToPessimistic{\varEnv}{\varExp}{\varVal'}{\varEnv'}{\varExp'}$
(with a check mark above the right arrow) to refer to the
{conservative} version of update that uses two-way environment
merge.
The second correctness property pertains only to this version.

\begin{theorem}[Conservative UpdateEval] \label{prop:weak-correctness}
If $\updatesToPessimistic{\varEnv}{\varExp}{\varVal'}{\varEnv'}{\varExp'}$,
then $\reducesTo{\varEnv'}{\varExp'}{\varVal'}$.
(\ie{}~If, when using two-way merge, pushing $\varVal'$ back to the program
$\program{\varEnv}{\varExp}$ produces $\program{\varEnv'}{\varExp'}$,
the new program evaluates to the updated value).
\end{theorem}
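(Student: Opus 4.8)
The plan is to prove the statement by structural induction on the derivation of $\updatesToPessimistic{\varEnv}{\varExp}{\varVal'}{\varEnv'}{\varExp'}$, reconstructing in each case an evaluation derivation $\reducesTo{\varEnv'}{\varExp'}{\varVal'}$ from the matching evaluation rule. The three replacement axioms are immediate: \ruleName{U-Const} gives $\reducesTo{\varEnv}{\varConst'}{\varConst'}$ by \ruleName{E-Const}; \ruleName{U-Var} gives $\reducesTo{(\envCatThree{\varEnv_1}{\envBind{\varVar}{\varVal'}}{\varEnv_2})}{\varVar}{\varVal'}$ by \ruleName{E-Var}, using $\varVar\notin\dom{\varEnv_2}$; and \ruleName{U-Fun} gives $\reducesTo{\varEnv'}{\expFun{\varPat}{\varExp'}}{\closure{\varEnv'}{\varPat}{\varExp'}}$ by \ruleName{E-Fun}. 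The real content is in the propagation rules, which split the update across subexpressions and recombine the resulting environments with a (two-way) merge.

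For those cases I would first establish two supporting lemmas. The first is the standard fact that evaluation depends only on the values of free variables: if $\reducesTo{\varEnv_a}{e}{\varVal}$ and $\varEnv_b$ is structurally equivalent to $\varEnv_a$ and agrees with it on $\freeVars{e}$, then $\reducesTo{\varEnv_b}{e}{\varVal}$, by induction on the evaluation derivation. The second is the defining property of conservative merge: whenever $\varEnv'=\mergeEnvsTwoWay{\varEnv_1}{\varEnv_2}{\varExp_1}{\varExp_2}$ is defined, $\varEnv'$ agrees with $\varEnv_1$ on $\freeVars{\varExp_1}$ and with $\varEnv_2$ on $\freeVars{\varExp_2}$. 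This is exactly where two-way merge earns its keep: its second and third clauses adopt $\varVal_1$ (resp.\ $\varVal_2$) only when the variable is \emph{absent} from $\freeVars{\varExp_2}$ (resp.\ $\freeVars{\varExp_1}$), and a genuine conflict on a variable free in both expressions leaves the merge undefined; so a successful merge can never have overwritten a free-variable binding of either expression. (The optimistic three-way merge drops this guard, which is why the correctness property is stated only for the conservative version.)

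Given these lemmas, each merge case is mechanical. In \ruleName{U-Cons} the induction hypotheses give $\reducesTo{\varEnv_1}{\varExp_1'}{\varVal_1'}$ and $\reducesTo{\varEnv_2}{\varExp_2'}{\expListTwo{\varVal_2'}{\cdots}}$; the merge lemma re-expresses both under the merged $\varEnv'$, and \ruleName{E-Cons} reassembles $\reducesTo{\varEnv'}{\expCons{\varExp_1'}{\varExp_2'}}{\expListTwo{\varVal_1',\varVal_2'}{\cdots}}$. \ruleName{U-Let} and \ruleName{U-If-True} are analogous, noting in the former that the body is evaluated under the binding $\envBind{\varVar}{\varVal_1'}$ on both sides. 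The most involved case is \ruleName{U-App}: the hypotheses give that $\varExp_1'$ evaluates to $\closure{\varEnv_f'}{\varVar}{\varExp_f'}$, that $\varExp_2'$ evaluates to $\varVal_2'$, and that $\varExp_f'$ evaluates to $\varVal'$ under $\envCat{\varEnv_f'}{\envBind{\varVar}{\varVal_2'}}$; after transporting the first two facts to $\varEnv'$, the third is precisely the third premise of \ruleName{E-App}, so the call reassembles to $\varVal'$.

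The main obstacle is the hidden side condition behind each transport step. To pass from ``$\varExp_i'$ evaluates correctly under $\varEnv_i$'' to ``$\varExp_i'$ evaluates correctly under $\varEnv'$'' I only know that $\varEnv'$ agrees with $\varEnv_i$ on $\freeVars{\varExp_i}$, the \emph{original} subexpression, whereas the term being evaluated is the \emph{updated} $\varExp_i'$. I therefore need a containment lemma that update introduces no new free variables, $\freeVars{\varExp'}\subseteq\freeVars{\varExp}$. This holds directly for most rules, but two deserve care. In \ruleName{U-Fun} the body $\varExp'$ arrives inside an externally supplied closure, with no premise tying it to the original $\varExp$; to close the induction I would strengthen the statement so that the free variables of the updated term are bounded by those of the original \emph{together with} those of the goal value (defining the free variables of a closure $\closure{\varEnv}{\varPat}{\varExp}$ relative to $\dom{\varEnv}$), so that the containment needed at a merge is furnished by whichever sub-derivation produced the closure---in \ruleName{U-App} that is the update of the function body. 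The other delicate spot is the insertion branch of \ruleName{U-List}, which grafts $\coerceToExp{\varVal'}$ into the program; for it I would record the auxiliary fact $\reducesTo{\varEnv}{\coerceToExp{\varVal'}}{\varVal'}$ for any $\varEnv$, which simultaneously discharges the new leaf's evaluation and bounds its free variables. With this free-variable/closure invariant threaded through as a conjunct of the induction, the remaining primitive cases (\ruleName{U-Plus-1}, \ruleName{U-Plus-2}, \ruleName{U-Lt}, and the dictionary rules) are routine.
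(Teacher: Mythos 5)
Your proposal is correct and follows essentially the same route as the paper's proof: induction on the update derivation, with your two supporting lemmas corresponding exactly to the paper's lemma that a successful two-way merge agrees with $\varEnv_1$ on $\freeVars{\varExp_1}$ and with $\varEnv_2$ on $\freeVars{\varExp_2}$, and its lemma that evaluation depends only on the bindings of free variables, followed by reassembly using the matching evaluation rules. The one place you go beyond the paper---threading the containment $\freeVars{\varExp'}\subseteq\freeVars{\varExp}$ (suitably strengthened for \ruleName{U-Fun} and for the insertion branch of \ruleName{U-List}) through the induction---patches a side condition that the paper's sketch leaves implicit when it transports evaluation of the \emph{updated} subexpressions across environments known to agree only on the free variables of the \emph{original} subexpressions, so it is a refinement of, not a departure from, the published argument.
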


Recall that the two challenges for correctness of update are when
variables uses are not updated consistently and when control-flow
decisions deviate from their original directions.
The two-way environment merge operator is defined and used precisely
to curb these situations.
A detailed proof sketch is available in
\suppMaterials{}~\citep{sns-oopsla-full}.

\subsection{Customizing Evaluation Update}
\label{sec:leo-custom-update}

Because of the inherent expressiveness of the language, evaluation update
cannot provide all possible intended behaviors that users may desire.
Consider the common evaluation and update pattern, below, that is not
well-handled by the update algorithm.
\begin{flalign*}
&
\reducesTo
  {\varEnv}
  {\expAppTwo{\op{map}}{\varExp_f}{\expList{\varExp_1}{\varExp_3}{\varExp_4}}}
  {\expList{\varVal_1}
           {\phantom{{\varVal_2}\ttcomma\miniSepThree}\varVal_3}
           {\varVal_4\phantom{\ttcomma{\varVal_5}\miniSepThree}}}
\\
&
\updatesToFancyUnderbrace
  {\varEnv}
  {\expAppTwo{\op{map}}{\varExp_f}{\expList{\varExp_1}{\varExp_3}{\varExp_4}}}
  {\expListFive{\varVal'_1}{\varVal_2}{\varVal_3}{\varVal'_4}{\varVal_5}}
  {\varEnv'}
  {\expAppTwo{\op{map}}{\varExp_f'}{\expListFive{\varExp_1'}{\varExp_2}{\varExp_3}{\varExp_4'}{\varExp_5}}}
  {\mathrm{Desired,\ but\ unavailable,\ program\ repair}}
\end{flalign*}
The \helperop{Diff} operation computes the following alignment between the
original and updated values:
that $\varVal_1$ and $\varVal_4$ have been updated to $\varVal'_1$ and $\varVal'_4$, and
new values $\varVal_2$ and $\varVal_5$ have been inserted after (the
updated versions of) $\varVal_1$ and $\varVal_4$.
What would be desirable is an updated program of the form indicated above,
where $\varExp_f'$, $\varExp_1'$, and $\varExp_4'$ are updated because of the two updated function
calls $\expApp{\varExp_f}{\varExp_1}$ and $\expApp{\varExp_f}{\varExp_4}$, and
where the synthesized expressions $\varExp_2$ are $\varExp_5$ are passed to the function $\varExp_f'$,
ideally producing the inserted values $\varVal_2$ and $\varVal_5$.

Unfortunately, the evaluation update approach described so far
cannot synthesize repairs of the desired form above.
To understand why, consider the standard definition of \verb+map+:

\begin{center}
\codeSizeInText
\begin{BVerbatim}
letrec map f list = case list of [] -> []; x::xs -> f x :: map f xs
\end{BVerbatim}
\end{center}

\noindent
Notice that the original list value
$\expList{\varVal_1}{\varVal_3}{\varVal_4}$
is constructed
completely within the body of \verb+map+: non-empty (cons) nodes are created in
the \verb+list+ $=$ \verb+x::xs+ branch and the empty node is created in the
\verb+list+ $=$ \verb+[]+ branch.
To reconcile the updated list,
$\varVal_5$ would have to be inserted into the empty list \verb+[]+ in \verb+map+,
and $\varVal_2$ would have to be inserted into the cons-node.
Besides the fact that we disallow structural updates anywhere but \ruleName{E-List}
(\cf{}~the ``\paraListLiterals{}'' discussion in
\autoref{sec:leo-basic-update}), such
changes are not desirable---the new cons-node would not
be the result of applying \verb+f+ to anything; it would end up inserting the same
element in between \emph{all} elements in the output; and we want the definition
of \verb+map+, a library function, to be frozen anyway.

In short, the evaluation update has no means to provide simultaneous reasoning
about structural changes to list values and computations they pass through.
This is one simple programming pattern not handled well; surely there are many
others.

\subsubsection{User-Defined Update Functions}

\begin{figure}[t]

\small

\centering

$\inferrule*[right=\ruleNameFig{E-Lens}]
  {
   \reducesTo
     {\varEnv}
     {\expApp{\expProj{\varExp_1}{\ttfld{apply}}}{\varExp_2}}
     {\varVal}
  }
  {\reducesTo
    {\varEnv}
    {\expApplyLens{\varExp_1}{\varExp_2}}
    {\varVal}
  }
$
\hsepRule
$\inferrule*[right=\ruleNameFig{U-Lens}]
  {
   \reducesTo
     {\varEnv}
     {\varExp_2}
     {\varVal_2}
   \\
   \varVal_{3} =
     \expRecdTwo
       {\ttfld{input}}{\varVal_2}
       {\ttfld{outputNew}}{\varVal'}
   \\\\
   \reducesTo
     {\envCat
        {\varEnv}
        {\envBind{\varVar}
                 {\varVal_{3}}}}
     {\expApp{\expProj{\varExp_1}{\ttfld{update}}}{\varVar}}
     {\expRecd
        {\ttfld{values}}
        {\expList{\cdots}{\varVal_2'}{\cdots}}}
   \\\\
   \varVar\textrm{ fresh}
   \\
   \updatesTo{\varEnv}{\varExp_2}{\varVal_2'}{\varEnv'}{\varExp_2'}
  }
  {\updatesTo
    {\varEnv}
    {\expApplyLens{\varExp_1}{\varExp_2}}
    {\varVal'}
    {\varEnv'}
    {\expApplyLens{\varExp_1}{\varExp_2'}}
  }
$

\vsepRule

$\inferrule*[right=\ruleNameFig{E-Update-App}]
  {
   \reducesTo
     {\varEnv}{\varExp}
     {\expRecdThree{\ttfld{fun}}{\closure{\varEnv_1}{\varVar}{\varExp_f}}
                   {\ttfld{input}}{\varVal_2}
                   {\ttfld{outputNew}}{\varVal'}}
   \\
   \\\\
   S = \set{ \miniSepThree \varVal_2' \miniSepThree | \miniSepThree
       ( \updatesTo{\envCat{\varEnv_1}{\envBind{\varVar}{\varVal_2}}}
                   {\varExp_f}
                   {\varVal'}
                   {\envCat{\varEnv_1}{\envBind{\varVar}{\varVal_2'}}}
                   {\varExp_f} ) \miniSepThree}
   \\
   |S| = n
  }
  {\reducesTo
    {\varEnv}
    {\expApp{\op{updateApp}}
            {\varExp}}
    {\expRecd{\ttfld{values}}
             {\expList{S_1}{\cdots}{S_n}}}}
$

\vsepRule

$\inferrule*[right=\ruleNameFig{E-Diff}]
  {
  {
  \startAlignedPremises
   \reducesTo{\varEnv}{\varExp_1}{\varVal_1}
   \\
   \reducesTo{\varEnv}{\varExp_2}{\varVal_2}
  \stopAlignedPremises
  }
   \sepPremise
   \listDiff{} = \computeListDiff{\varVal_1}{\varVal_2}
  }
  {\reducesTo
    {\varEnv}
    {\expAppTwo{\op{diff}}{\varExp_1}{\varExp_2}}
    {\coerceToVal{\listDiff{}}}
  }
$
\hsepRule
$\inferrule*[right=\ruleNameFig{E-Merge}]
  {
  {
  \startAlignedPremises
   \reducesToAligned{\varEnv}{\varExp_1}{\varVal_1}
   \\
   \reducesToAligned{\varEnv}{\varExp_2}{\expList{\varVal_2}{\cdots}{\varVal_n}}
  \stopAlignedPremises
  }
   \sepPremise
   \varVal =
     \mergeVals{\varVal_1}{\varVal_2}{\mergeVals{\varVal_1}{\cdots}{\varVal_n}}
  }
  {\reducesTo
    {\varEnv}
    {\expAppTwo{\op{merge}}{\varExp_1}{\varExp_2}}
    {\varVal}
  }
$

\caption{Evaluation and Update for User-Defined Lenses and Primitive Helper Functions.}
\label{fig:update-custom}
\end{figure}

Instead of providing built-in support for updating \verb+map+ and
other common building blocks, we expose an API for users (or
libraries) to customize evaluation update.
Specifically, when a ``bare'' function is called, the program may
optionally provide a second ``\verb+update+'' function that specifies how
to push new argument values back to the call.
A pair comprising bare and \verb+update+ functions forms a
\emph{lens}~\cite{lenses},
specified in \littleLeo{} as a record with the following type
(written in comments starting with `\verb|--|' because, currently, our presentation and implementation do
not include types):

\vspace{2pt} 

\begin{center}
\codeSizeInText
\begin{BVerbatim}
-- type alias Lens a b =
--   { apply: a -> b, update: {input: a, outputNew: b} -> {values: List a} }
\end{BVerbatim}
\end{center}

\vspace{1pt} 

\noindent
In lieu of types,
the expression $\expApplyLens{\varExp_1}{\varExp_2}$ syntactically marks
the function application as a lens application.
The \ruleName{E-Lens} rule (\autoref{fig:update-custom})
projects the $\ttfld{apply}$ field of the lens
argument $\varExp_1$ and then applies it to the argument $\varExp_2$.
To push a new value $\varVal'$ back to the lens application
$\expApplyLens{\varExp_1}{\varExp_2}$, the \ruleName{U-Lens} rule uses the
$\ttfld{update}$ function of the lens.
The function argument is re-evaluated to $\varVal_2$
and, together with the new output $\varVal'$, is passed to the
$\expProj{\varExp_1}{\ttfld{update}}$ function.
Each value $\varVal_2'$ in the \verb+values+ list of results is pushed back to
the expression argument $\varExp_2$ and then used as the argument of the updated
function call expression.

Because the lens mechanism in \littleLeo{} is intended to provide a way to
customize the built-in update algorithm, we expose several internal
operators---\op{updateApp}, \op{diff}, and \op{merge}
(\autoref{fig:syntax})---that custom \verb+update+ functions can refer to.
We will explain the semantics of these operations (\autoref{fig:update-custom})
as they arise in the discussion below.

\subsubsection{Example Lenses}

Before defining a custom lens for list \verb+map+, we start with a simpler example:
mapping a ``\verb|MaybeOne|'' value, encoded as a list with either zero or one elements.

\parahead{Maybe Map}

\begin{figure}[t]
\codeSizeInFigure
\newcommand{\ttcomment}[1]{\texttt{\textit{#1}}}
\begin{Verbatim}
-- type alias MaybeOne a = List a

-- maybeMapSimple : (a -> b) -> MaybeOne a -> MaybeOne b
   maybeMapSimple f mx = case mx of []  -> []
                                    [x] -> [f x]

-- maybeMapLens : a -> Lens (a -> b, MaybeOne a) (MaybeOne b)
   maybeMapLens default =
     { apply (f, mx) = Update.freeze maybeMapSimple f mx
     , update {input = (f, mx), outputNew = my} =
         case my of
           []  -> { values = [(f, [])] }
           [y] ->
             let z = case mx of [x] -> x; [] -> default in
             let res = Update.updateApp {fun (g,w) = g w, input = (f,z), outputNew = y}
             in { values = map (\(newF,newZ) -> (newF, [newZ])) res }
     }

-- maybeMap : a -> (a -> b) -> MaybeOne a -> MaybeOne b
   maybeMap default f mx =
     Update.applyLens (maybeMapLens default) (f, mx)
\end{Verbatim}
\caption{Custom Lens for \texttt{MaybeOne.map}.}
\label{fig:map-maybe-lens}
\end{figure}

\autoref{fig:map-maybe-lens} shows a straightforward definition of
\verb+maybeMapSimple+, which is frozen to prevent changes to this ``library''
function.
When reversing calls to \verb+maybeMapSimple+, the built-in update algorithm
cannot deal with adding or removing elements from the argument list (as with
list \verb+map+, discussed above).
Thus, \autoref{fig:map-maybe-lens} defines a custom lens called
\verb+maybeMapLens+.

To deal with the case when the updated value includes an element when there
was none before, \verb+maybeMapLens+ is parameterized by a \verb+default+ element.
The functions \verb+apply+ and \verb+update+ take arguments \verb+f+ and
\verb+mx+ as a pair.
The \verb+maybeMap+ definition on the last line of \autoref{fig:map-maybe-lens}
is defined as the
application of this lens (wrapped in \verb+applyLens+) to its arguments
packaged up in a pair.
In the forward direction, the \verb+apply+ function
simply invokes \verb+maybeMapSimple+.
In the backward direction, the \verb+update+ function uses a record pattern to
project the \ttfld{input} and \ttfld{outputNew} fields
and handles two cases.
If the new output \verb+my+ is \verb+[]+, the updated \verb|MaybeOne| value
should be \verb+[]+, and the function \verb+f+ is left unchanged---these are
paired and returned as a singleton list of result \verb+values+.
If the new output \verb+my+ is \verb+[y]+, the goal is to push \verb+y+ back
through a call of \verb+f+.
If the original input maybe value \verb+mx+ is \verb+[x]+, then the function
call $\expApp{\ttf}{\ttz}=\expApp{\ttf}{\ttx}$ needs to be updated.
If the original input maybe value is \verb+[]+, however, there was no original
input; so, $\expApp{\ttf}{\ttz}=\expApp{\ttf}{\mathtt{default}}$ needs to be updated.

To achieve this, the primitive \op{updateApp} operator is used to push
\verb+y+ back through $\expApp{\ttf}{\ttz}$ using the built-in algorithm
(starting with rule \ruleName{U-App}).
The semantics of this operation (\ruleName{E-Update-App} in
\autoref{fig:update-custom}) computes all possible
updated values $\varVal'_2$ and puts them in a list returned to the source
program.
In this way, \verb+updateApp+ exposes the \ruleName{U-App} rule to custom
\verb+update+ functions.
Each value that comes out in \verb+results.values+ is a pair of a
possibly-updated function \verb+newF+ and possibly-updated argument \verb+newX+;
to finish, the second is wrapped in list and this pair forms a solution.
This function ``bootstraps'' from the primitive \ruleName{U-App} rule, lifting
its behavior to the \verb|MaybeOne| type.

For example, consider the function \verb|display [a, b, c] = [a, c + ", " + b]|
(essentially the function on line 14 of \autoref{fig:overview-initial-program})
and two calls to \verb|maybeMap defaultState display|,
where the definition
\verb|defaultState = ["?","?","?"]| serves as placeholder state data:

\vspace{2pt} 

\begin{center}
\codeSizeInText
\begin{BVerbatim}
maybeRow1 = maybeMap defaultState display [["New Jersey", "NJ", "Edison"]]
maybeRow2 = maybeMap defaultState display []
\end{BVerbatim}
\end{center}

\vspace{1pt} 

\noindent
Updating the result of \verb+maybeRow1+ to \verb+[]+ leads to updating
the argument to \verb+[]+.
Updating the result of \verb+maybeRow2+ to \verb+[["New Jersey", "Edison, NJ"]]+
leads to updating the argument to \verb+[["New Jersey", "NJ", "Edison"]]+.
Furthermore, updating the result of \verb+maybeRow2+ to
\verb+[["New Jersey", "Edison NJ"]]+ simultaneously inserts the appropriate
three-element list \emph{and} changes the separator \verb+", "+ to \verb+" "+.
None of these three interactions is possible if instead calling
\verb|maybeMapSimple display|, which is updated by the built-in algorithm alone.

\parahead{List Map and Append}

The \verb+maybeMapLens+ definition demonstrates an approach for dealing with updated
transformed values---pushing them back through function application, as
usual---and for dealing with newly inserted values---pushing them back through
function application with a default element.
We can extend this approach to a \verb+listMapLens+ definition that operates on lists with
arbitrary numbers of elements rather than just zero or one.
The definition (not shown) is a mostly
straightforward recursive traversal, with a few noteworthy aspects:
(1) the use of primitive operator \verb|diff| (for which \ruleName{E-Diff}
in \autoref{fig:update-custom} exposes the \helperop{Diff} operation used by
\ruleName{E-List}) to align the original and updated output lists;
(2) the use of primitive operator \verb|merge| (for which \ruleName{E-Merge}
exposes the three-way value merge operation) to combine multiple updates to
the input function; and
(3) when inserting a new element into the output list, choosing to use an
adjacent element from the original list (rather than a caller-specified default)
to push back through a function call.

Our library and examples implement several additional lenses for list
functions.
For example, we define a lens for appending lists, which generates multiple
candidate solutions when inserting elements at the ``split'' between the two
input lists.
(Our built-in evaluation update for concatenating strings does the same.)

\parahead{Control-Flow Repair}\label{sec:control-flow-repair}

As discussed in \autoref{sec:leo-basic-update}, our evaluation update
rules for conditionals---\ruleName{U-If-True} and \ruleName{U-If-False}---assume
that, after update, the predicate will evaluate to the same boolean and, thus,
the same branch will be taken.
Our treatment of conditionals does not update guard expressions, but
we can define a lens that simulates that effect.

The \verb|if_| function in \autoref{fig:fancy-if} employs a lens to augment
the built-in approach for updating if-expressions 
with the ability to change
the guard expression, by pushing a new boolean value back to it.
The lens simply wraps an if-expression in the forward direction,
but provides different behavior in the backward direction.
If the original guard \verb|c| evaluates to \verb|True| and the
original else branch \verb|e| evaluates to the new expected value \verb|v|,
then pushing \verb|False| back to \verb|c| constitutes a second
solution, called \verb|updateGuard|.
The treatment for when \verb|c|
evaluates to \verb|False| is analogous.
For example, \verb|if_ True 1 2| evaluates to \verb|1|; when pushing
\verb|2| back to the lens application, \verb|False| is pushed back to
the guard.

\begin{figure}[t]
\codeSizeInFigure
\begin{Verbatim}
   if_ cond thn els =
     Update.applyLens
       { apply (c, t, e) = if c then t else e
       , update {input=(c,t,e), outputNew=v} =
           let updateSameBranch =
             if c then (c, v, e) else (c, t, v)
           in
           let updateGuard =
             if (c && e == v) || (not c && t == v) then [(not c, t, e)] else []
           in
           { values = updateSameBranch::updateGuard }
       }
       (cond, thn, els)
   
   abs n =
    -- if  (n < 0)  then n  else (-1 * n)
       if_ (n < 0)       n       (-1 * n)
\end{Verbatim}
\caption{Custom Lens for Control-Flow Repair.}
\label{fig:fancy-if}
\end{figure}

\autoref{fig:fancy-if} shows an absolute value function, defined
in terms of \verb|if_| rather than \verb|if|.
The expression \verb|map abs (List.range -2 2)| evaluates to \verb|[-2,-1,0,-1,-2]|,
which is not what is intended; the desired fix would change the guard predicate
(the first argument) to \verb|(n >= 0)|.
Suppose the programmer updates the last element in the output (\verb|-2|,
produced by the else-branch \verb|(-1 * n)|) to \verb|2|.
Because the branch not taken (the then-branch, \verb|n|) produces
the desired value, \verb|True| is pushed back to \verb|(n < 0)|
(\cf{}~\verb|updateGuard|).
To resolve this update, the \ruleName{U-Lt} rule (\autoref{fig:update-numbers})
flips the relational operator (keeping the operands the same), thus
producing the new guard \verb|(n >= 0)|.

\section{\snsLeo{}: Direct Manipulation Programming for HTML}
\label{sec:implementation}

We implemented bidirectional evaluation in the 
\snsLeo{} \emph{direct manipulation programming system}~\cite{sns-pldi}.
In addition to the novel update algorithm, our new system supports
writing programs in \leo{}---an Elm-like language that extends
\littleLeo{} with several programming conveniences
(\url{http://elm-lang.org/})---for generating HTML output.
Users can edit the HTML output of the program using graphical
user interfaces, which trigger the evaluation update to reconcile
the changes.
Our implementation is written in a combination of Elm and
JavaScript, extending the implementation of \sns{} by \citet{sns-pldi}.
Our changes constitute more than 12,000 lines of code.
%
The new system, \snsLeo{}~\snsVersion{}, is available at
\url{http://ravichugh.github.io/sketch-n-sketch/}.

Our implementation of program update is configured to use
three-way environment merge, favoring the flexibility to update
only some uses of a variable over the correctness guarantees
afforded by two-way merge (\autoref{sec:leo-basic-update}).
In this section, we describe:
optimizations and other enhancements to turn the evaluation update relation
into an algorithm suitable in a practical setting;
features in \leo{} to support programming practical applications;
and
a user interface for manipulating HTML output values and choosing program
updates.

\subsection{Enhancements for Program Update in Practice}
\label{sec:implementation-update}

Our implementation addresses several concerns necessary for the evaluation update
relation described in \autoref{sec:leo} to form the basis for a practical
algorithm.

\parahead{Optimization 1: Tail-Recursive Update}

A direct implementation of the update algorithm would result in a call
stack that increases with each recursive call.
Because the stack space in current browsers is relatively limited, this approach
leads to exceptions for many benchmarks, even relatively
small ones.
Because the heap space is usually less limited than stack space, we applied a
rewriting of the update procedure to continuation-passing style, which makes the
update procedure tail-recursive and, thus, compile to a JavaScript while loop.
This transformation is also compatible with another feature that returns a lazy list
of all solutions computed by the algorithm.
In the future, we could also use this transformation to repeatedly pause the
computation, so that it would not block the user interface (which is
single-threaded in JavaScript).

\parahead{Optimization 2: Merging Closures}

Three-way merging environments na\"ively---following the definition of
$\combineEnvs{\varEnv}{\varEnv_1}{\varEnv_2}$---would require exponential time.
Each closure in the environment refers to the prefix of the environment
(which might have been modified).
Hence, to compare closures, we need to compare their environments, and so on.
A simple, but crucial, optimization consists of merging bindings for only
those variables which appear free in the associated function bodies.

\newcommand{\paraDiffs}{Propagating and Merging Edit Differences}
\parahead{Optimization 3: \paraDiffs}

Another fundamental scalability issue is that the evaluation update judgement
propagates expected values $\varVal'$, even though large portions of $\varVal'$
may be identical to the original values $\varVal$.
Instead, our implementation computes an \emph{edit difference}
between $\varVal$ and $\varVal'$ that, together with those values,
serves as a compact but complete characterization of the changes.
For example, for numbers and booleans, the edit difference is a boolean flag
indicating whether the value has changed (\ie{}~whether \ruleName{U-Const} needs
to process this value).
For lists, the edit difference is a list of index ranges
associated with a number of insertions, a number of
removals, or an update based on a value difference---comparable to the list differences described in ``\paraListLiterals{}''
(\autoref{sec:leo}).
Edit differences for other types of values, for expressions, and for
environments are similar.
These edit differences are propagated through the evaluation update algorithm.

We also expose edit differences to user-defined lenses, so that they can benefit from
this optimized representation.
First, compared to the presentation of \ruleName{U-Lens} in
\autoref{fig:update-custom}, we also include the field \ttfld{outputOld} in the
record argument $\varVal_3$ to \verb+update+: its value $\varVal$ is the
original result of the function call
$\expApp{\expProj{\varExp_1}{\ttfld{apply}}}{\varExp_2}$.
The \verb+update+ function can choose to take \verb+outputOld+ into account when
returning its list of new argument \verb+value+s.
Furthermore, to take advantage of the optimized representation, the record
argument also contains a \ttfld{diffs} field that describes the edit differences
that turn \verb+outputOld+ into \verb+outputNew+.
The \verb+update+ function can optionally return a \verb+diffs+ field (in
addition to \verb+values+), in which case, the evaluation update algorithm can
continue to propagate changes using the optimized representation.\footnote{
Reasoning with \texttt{values} and \texttt{diffs} can be thought of as
``states'' and ``operations'', respectively, in the terminology of
synchronization, as explained by \citet{lenses}.}
In our library, we implement a \verb+foldDiff+ helper function and use it to
define edit difference-based versions of the reversible \verb+map+ and
\verb+append+ lenses described in \autoref{sec:leo-custom-update}.

\parahead{Usability: Whitespace and Formatting}

So that updated programs remain readable and conducive to subsequent
programmatic edits, our implementation takes care to insert and remove
whitespace in a way that respects the whitespace conventions of surrounding
expressions (\cf{}~the list highlighted in green in
\autoref{fig:overview-add-row}).
To achieve this, our abstract syntax tree explicitly records whitespace in
between expressions and concrete syntax tokens, and these are used to
determine how much whitespace to insert before and after newly created
expressions.

\subsection{Enhancements for Programming in Practice}
\label{sec:implementation-leo}

In addition to the constants, lists, and records presented in \littleLeo{}
(\autoref{fig:syntax}),
\leo{} also supports tuples, user-defined datatypes,
and value-indexed dictionaries with an
arbitrary number of bindings.
Our current implementation does not perform type checking, but a standard
ML-style type system is fully compatible with our approach 
and is planned for future work.

Programs that generate HTML typically perform a large amount of string
processing and JavaScript code generation.
We briefly describe language extensions that facilitate such tasks.

\parahead{Regular Expressions}\label{sec:regex}

Our implementation provides two common regular expression operators.
The first operation, $\expRegexExtract{\mathit{re}}{\varStr}$, takes a regular
expression $\mathit{re}$ (as a string) and a string $\varStr$ to transform, and
optionally returns a list of all the groups of the first match of $\mathit{re}$
to $\varStr$.
The update semantics consists of taking a set of non-overlapping modified
groups---taken greedily from the right---and pushing them back to their original
place in the original string.
For example, $\expRegexExtract{\expStr{b(.)}}{\expStr{bab}}$ produces
$\expApp{\texttt{Just}}{\expListOne{\expStr{a}}}$.
If the result is updated to $\expApp{\texttt{Just}}{\expListOne{\expStr{x}}}$,
the string $\varStr$ is updated to $\expStr{bxb}$.

The second operation, $\expRegexReplace{\mathit{re}}{f}{\varStr}$, takes a
regular expression $\mathit{re}$, a function $f$, and a string $\varStr$ to
transform.
The function argument provides access to the match information, including the
index into the string, the subgroups and their positions, the global match, and
the replacement number.
The function uses this information to produce a string.
Interestingly, the final string after replacement is an interleaved
concatenation of strings that did not change and applications of the lambda to
the record associated to each match.
For example, in the string $\expStr{arrow}$, if we replace \expStr{(rr|w)} with
the function
$f=\expFun{\ttm}{\ \expIteLong{\expEqual{\expProjField{\ttm}{\ttfld{match}}}{\expStr{w}}}{\expStr{r}}{\expStr{rm}}}$,
we build an expression that looks like
$\expPlus{\expStr{a}}
{\expPlus{\expApp{f}{\expRecd{\ttfld{match}}{\expStr{rr}}}}
{\expPlus{\expStr{o}}
{\expApp{f}{\expRecd{\ttfld{match}}{\expStr{w}}}}}}$.
We use this expression both for evaluation and update.
For the latter, we first run the update procedure on this expression.
Then, in the environment, we recover an updated function $f'$.
To update the original string $\varStr$, we gather the information about the matches
that changed (including the subgroups) and apply them to $\varStr$.

Using the reversible \regexextract{} operation,
we are able to build a
\verb+String+ library with reversible variants of several common operations:
\verb|take|, \verb|drop|, \verb|match|, \verb|find|, \verb|toInt|,
\verb|trim|, \verb|uncons|, and \verb|sprintf|.

\parahead{Long String Literals}

Many languages allow string literals to refer to variables or expressions, which
are then \emph{expanded} (a.k.a. \emph{interpolated}).
Our implementation of \leo{} provides long string literals---distinguished by
triple double quotes and which may span multiple lines---that support string
interpolation of expressions (written $\expLongStr{@(\varExp)}$).
To further facilitate string processing tasks, we also allow variables to be
defined within long string literals
(written $\expLongStr{@\expLet{\varVar}{\varExp;}{\varStr}}$).

\parahead{Dynamic Code Evaluation}

As is common in \web{} programming, several of our examples use
a \emph{dynamic code evaluation} primitive, $\expEval{\varExp}$,
to dynamically compute strings that are meant to parse and evaluate
as \leo{} expressions.
The evaluation and update rules (not shown) are straightforward; the former
employs the parser to convert a code string value $\varStr$ into an expression
to evaluate, and the latter additionally employs the unparser to push an updated
code string $\varStr'$ back to the expression that generated it.

\parahead{HTML-to-String Lens}

We designed a lens for parsing an HTML string to a list of
\leo{}-encoded HTML nodes.
Challenges for this implementation include:
tolerating a variety of malformed documents (as most practical HTML parsers do);
and
carefully tracking whitespace, quotation marks, and other characters that are
not stored in the resulting DOM (these details are needed to
respect the formatting conventions of the program).
As a result, for convenience, users can copy-and-paste HTML strings into long
string literals.

\subsection{Direct Manipulation User Interface for Updating HTML Output Values}
\label{sec:implementation-ui}

Our last major extension to \snsLeo{} is the user interface for updating
output values and interacting with the program update algorithm.
Below, we describe several different direct manipulation value editors.
Regardless of which value editor is used to make changes, the connection
to the update algorithm is as described in \autoref{sec:overview}
(\cf{}~``\paraComputingDisplaying{},'' ``\paraAmbiguity{},'' and
``\paraAutoSync{}'').

\parahead{Value Editors}

We implement three kinds of user interfaces for manipulating output.
The first mode is a \emph{Graphical User Interface}, which allows the user to make
edits directly in the HTML-rendered output.
Currently, the main edits we support are text-based:
in our translation of HTML text nodes, we add the \expStr{contenteditable}
attribute to allow changes to the text.
In the future, we could add direct manipulation widgets for common
properties of other kinds of elements, such as color, position, size, padding, \etc{}
The second mode is a \emph{Text Interface}, which allows the user to make
edits to the output value rendered as a string.
The interface allows the string to be rendered either as ``raw'' HTML or in the
syntax of \leo{} values.
The final mode integrates with the built-in \emph{DOM Inspector} provided by
modern \web{} browsers.
The features provided by the browser allow users to, for example, select DOM
elements---either by right-clicking or by navigating in a
separate view of the DOM tree---and then use built-in text- and GUI-based
panels for adding, removing, and editing elements and their attributes.

\section{Examples and Experiments}
\label{sec:evaluation}

To validate our approach, we implemented \benchmarksnum{} example programs in
\snsLeo{}---comprising approximately \benchmarkslocfloored{} lines of \leo{} code in total---that are
designed to facilitate a variety of useful direct manipulation interactions
enabled through bidirectional evaluation.
\autoref{fig:benchmarks} summarizes our examples and experiments.
Examples marked with asterisks have accompanying videos on the \web{}.
Next, we describe noteworthy aspects of the example programs.
Then, we report results from performance experiments for the update algorithm.

\subsection{Examples}
\label{sec:evaluation-examples}

We describe several example programs and corresponding direct
manipulation interactions.

\newcommand{\benchmarkName}[1]{#1}

\parahead{States Table (Overview Example)}

The \benchmarkName{States Table A} benchmark in
\autoref{fig:benchmarks} includes direct
manipulation text and DOM edits (like those described in
\autoref{sec:overview-text-edit} and \autoref{sec:overview-inspector-edit}),
and the \benchmarkName{States Table B} benchmark
corresponds to interactions with custom buttons
(like those in \autoref{sec:overview-add-row}).
Although the implementation details are not crucial,
the main takeaway is that custom user interface features can be built by:
(i) defining a lens that, in the forward direction, attaches
extra ``state'' to some data and, in the backward direction, refers to the
updated state to determine how to update the data; and
(ii) exporting HTML elements that store the state
and handle events---using JavaScript code generated as \leo{}
strings---that update the state in response to browser events.
\parahead{Scalable Recipe Editor}

A recipe is presented in
such a way that ingredient amounts can be scaled easily with respect to a
desired number of \verb|servings|.
The source of the recipe is stored as a string containing HTML code.
There, every occurrence of \expStr{multdivby(p,q)} is first replaced using regexes  (\autoref{sec:regex}) by the number \verb|(p/q)*servings|, where \verb|servings| is defined for the entire recipe.
The resulting string is then evaluated by a string-to-HTML lens.
To insert the quantity ``5 eggs'' proportional to a current number of \verb|servings| of 10, users can simply enter \expOutputStr{\_5\_ egg} in the output, and the \expOutputStr{\_5\_} is replaced by custom lenses to \expStr{multdivby(5,10)} in the source text.
Similarly, inserting \expOutputStr{\_5s\_} inserts a conditional plural in \expStr{s}.
Because all proportional quantities are connected to \verb+servings+ through
invertible arithmetic operations, the user can edit any of the values as
desired---\eg{}~to scale the recipe to make 32 servings, or to find how many
servings can be made with 12 eggs---all others are updated accordingly.

\parahead{Mini Markdown-to-HTML Editor}

We ported a regular expression-based PHP program that converts
Markdown strings to HTML strings.\footnote{
Markdown: \url{https://daringfireball.net/projects/markdown/};
PHP program: \url{https://gist.github.com/jbroadway/2836900}
}
We can, for example,
demarcate a string in the output text with underscores
that get pushed back to the Markdown string.
Then, after evaluation, the text is italicized due to \verb+<em>+
tags
inserted by regular expression transformations.
For more advanced functionality, we implemented lenses to:
translate Markdown headers (\verb|#|, \verb|##|, \etc{})
to their HTML counterparts (\verb|<h1>|, \verb|<h2>|, \etc{});
translate unordered and ordered list elements (\eg{}~\verb|<li>| to either
\expStr{* A} or \expStr{1. A}); and
translate \verb|<div>| and \verb|<br>| elements to the correct number
of newlines.

\parahead{Additional Examples}

Common to all examples is that text---text elements, links, buttons,
placeholders, attributes, \etc{}---can be changed from the output.
Here, we briefly describe the remaining examples.
\newcommand{\examplequick}[1]
  {{#1}}
\examplequick{Budgetting} is the computation of a budget for which, if we update
the surplus \verb|(income - expenses)| to be zero, program
updates include all choices for changing the cost values of \verb+lunch+,
\verb+registration+, or incomes such as \verb+sponsors+.
\examplequick{Model-View-Controller (MVC)} demonstrates an interactive page that
manipulates the state of the application with buttons and user-defined
functions.
In \examplequick{Linked-Text}, users can create links
(``variables'') between portions of text so that updating any clone
updates them all.
\examplequick{Dixit} is a game scoresheet to track bets and compute scores.
\examplequick{Translation} is an instruction manual in two languages where users can change the language, add, and clone translations.
\examplequick{\LaTeX{} in HTML} supports writing programs in a miniature
\LaTeX{} subset, including \verb|\newcommand|, \verb|\section|, \verb|\ref|,
\verb|\label|, and simple math commands such as \verb$\frac$.
An interesting lens for this example allows reference numbers in the output
to be updated and pushed back to corresponding reference names in the
\LaTeX{} program.

\subsection{Performance of Update Algorithm}

\newcommand{\nudgeRight}
  {\hspace{0.09in}}

\newcommand{\tableRow}[9]{
  #1 & #2 & #3 & #4 & #6 & #9 & #7 \\
}
\newcommand{\tableRowToUpdate}[9]{
  \rkc{#1} & #2 & #3 & #4 & #6 & #9 & #7 \\
}

\newcommand{\summaryRow}{
     {\total{Total} / \average{Average}}
  &  {\total{\benchmarksloc}}
  &  \average{\benchmarksevalaverage}
  &  \average{\benchmarksevalstddev}
  &  \vspace{-6pt}\total{\benchmarksnumupd}
  &  {\average{\benchmarkssolutionsaverage}}
  &  {\total{}} &  {\total{}}
  &  {\total{}} &  {\total{}}
  & \average{(\benchmarksaverageoptupd{}}
  & \average{\benchmarksaveragestddev{})}
  & \average {(\benchmarksaveragespeedup{})}
}

\newcommand{\columnHeader}[1]
  {\textbf{#1}}

\newcommand{\hasvideo}{*}
\newcommand{\speedup}[1]{#1{}$\times$}
\newcommand{\always}[1]{always #1}
\newcommand{\fromto}[2]{#1 to #2}
\newcommand{\noambiguity}{N/A}
\newcommand{\nospeedup}{N/A}
\newcommand{\total}[1]{\textit{#1}}
\newcommand{\average}[1]{\textbf{#1}}
\newcommand{\lessthanms}[1]{$<${}#1{}ms}
\newcommand{\plusminus}{$\pm$}
\begin{figure}[t]
\small
\begin{tabular}{l|c|r@{\hspace{0pt}}l|c|c|r@{\hspace{0pt}}l|r@{\hspace{0pt}}l|r@{\hspace{0pt}}l@{\hspace{2pt}}c}
\columnHeader{Example} &
\columnHeader{LOC} &
\multicolumn{2}{c|}{\columnHeader{Eval}} &
\columnHeader{\#Upd} &
\columnHeader{\#Sol} &
\multicolumn{2}{c}{\columnHeader{Fastest Upd}} &
\multicolumn{2}{|c}{\columnHeader{Slowest Upd}} &
\multicolumn{3}{|c}{\columnHeader{Average Upd}}
\\
\hline
\benchmarks
\hline
\summaryRow
\end{tabular}

\caption{
\columnHeader{LOC}: Lines of code in \leo{};
\columnHeader{Eval}: Time to evaluate program (in milliseconds) before any
direct manipulation changes;
\columnHeader{\#Sol} is the average of number of solutions obtained during interaction.
\columnHeader{Fastest/\mbox{Slowest}/Average Upd}: Of the \#Upd invocations of program
update, the fastest/slowest/average time taken by the optimized version (in milliseconds)---averaged from 10
trials---with the speedup against an average of 10 trials of the unoptimized.
Asterisks mark examples for which screencast videos are available.
}

\label{fig:benchmarks}
\end{figure}

\newcommand{\columnHeaderText}[1]
  {``{#1}''}

To validate that our program update algorithm is fast enough to support an
interactive direct manipulation workflow, we measured the running time for
several benchmarks.
\autoref{fig:benchmarks} shows a summary of our results.
Each benchmark consists of an example program and an \emph{interactive
editing session}.
The \columnHeaderText{LOC} column shows the number of \leo{} lines of code for
the initial program and \columnHeaderText{Eval} shows the running time (in
milliseconds) averaged over 10 trials.
For each example, we performed a series of direct manipulation edits and program
updates---each session produced a sequence of calls to the update algorithm.
\columnHeaderText{\#Upd} shows the
number of calls to the program update algorithm during the session.
The interactive sessions for programs marked with asterisks were recorded and
are available on the \web{}.

We conducted an offline performance evaluation by replaying the sequence of updates in each session.
We ran our benchmarks on Node.js 6.9.5 under Windows 10 running on an Intel(R) Core(TM) i7-6820HQ CPU
@ 2.70GHz with 32 GB of RAM, allocating 4GB of RAM and one of the 8 processors because JavaScript is single-threaded.
For each call to program update, we measured the time to compute
solutions with an unoptimized version of the algorithm---which includes Optimizations 1 and 2 described in
\autoref{sec:implementation-update}---and a ``fully-optimized'' version---which also includes Optimization 3 regarding
edit differences.
Note that without Optimizations 1 and 2, the
algorithm runs out of stack or heap stack on most benchmarks.
We performed each of these calls 10 times; the running times in the last three
columns of \autoref{fig:benchmarks} are averages over the 10 trials.
The \columnHeaderText{Fastest Upd} column shows the (average) running time of
the fastest call to update (using the optimized algorithm) for the
given session; \columnHeaderText{Slowest Upd} shows the slowest;
\columnHeaderText{Average Upd} shows the (average) running time off all calls in
the session, with the speedup of running time between optimized and unoptimized.

\parahead{Results}

The data in \autoref{fig:benchmarks} lends support to three observations.

\paragraph{Edit Difference Optimization is Crucial for Performance}
Consider the \columnHeaderText{Average Upd} column of the last row;
these averages are in parentheses to indicate that they are
averages across calls to update, as opposed to averages of the rows
above. 
Across all \benchmarksnumupd{} calls to update across all benchmarks,
the average running time for the fully-optimized algorithm is
\benchmarksaverageoptupd{}ms.
This is a \benchmarksaveragespeedup{} speedup compared to the unoptimized version.
Thus, the use of edit differences, rather than plain values, 
is crucial for making evaluation update feasible in our setting.

\paragraph{Performance of Evaluation Update is Similar to Evaluation}

The average evaluation update time (\benchmarksaverageoptupd{}ms)
is nearly the same as the average evaluation time (\benchmarksevalaverage{}ms).
Because the evaluation update algorithm performs much the same work as
evaluation, this suggests that our optimizations achieve most opportunities
for speedup.
Further gains, both for evaluation and update, are likely to result from
optimizing the interpreter---or compiling to ``native'' JavaScript code---as
opposed to additional optimizations of the current approach.
Extending evaluation update to the setting of compiled code is a direction for
future work.

\paragraph{Little Ambiguity in Our Example Interactions}

Across all \benchmarksnumupd{} calls to update across all benchmarks,
the average number of solutions is \benchmarkssolutionsaverage{}.
The degree of ambiguity for program repairs is heavily dependent on the
programs and interactions under consideration, so this number should not be
interpreted too broadly.
However, we argue that our example programs and interactions demonstrate a
variety of useful and realistic scenarios for interactive editing.
Together with the data, this suggests that experts can develop programs
in such a way that direct manipulation edits lead to the desirable repairs
without an overwhelming amount of ambiguity.

\newcommand{\hobit}{\textit{HOBiT}}

\newcommand{\biX}{\textit{Bi-X}}
\newcommand{\vuX}{\textit{Vu-X}}

\newcommand{\phpQuickFix}{\textit{PHPQuickFix}}
\newcommand{\phpRepair}{\textit{PHPRepair}}

\section{Related Work}
\label{sec:discussion}

The motivations and approach of our work overlap with various efforts towards
bidirectional programming,
automated program repair, and
combining programming languages with direct manipulation user interfaces.

\parahead{Bidirectional Programming}

\emph{Lenses}~\citep{lenses} have been an effective way to build
bidirectional transformations in a variety of domains, including
relational data~\cite{relational-lenses},
semi-structured data~\cite{lenses,bixid},
strings~\cite{boomerang,matching-lenses}, and
graphs~\cite{Hidaka:2010}.

Our work involves two notions of ``bidirectionality.''
First, we seek to reverse \emph{all} programs in a
general-purpose language; that is, we define a ``backward interpreter.''
Second, we allow user-defined functions
to customize the behavior of the backward interpreter,
by exposing its
operations in an API inspired by lenses.

\paragraph{Round-Trip Laws}
The foundational work on bidirectional programming
requires that lenses satisfy various \emph{round-trip laws}.
In contrast, our approach is simply for
users to write arbitrary pairs of (well-typed) \verb+apply+ and
\verb+update+ functions.
Many of the lenses we write to achieve custom user interface
interactions (\cf{}~\autoref{sec:evaluation-examples}) violate even
the basic laws, by introducing extra state that is unconditionally ``reset''
in the reverse direction.
We validate our more practically motivated design choices by
demonstrating a variety of desirable interactions.
In future work, static and dynamic mechanisms for checking round-trip laws
could be incorporated for situations in which programmers wish them to
be enforced.

\paragraph{Alignment}
Our update algorithm uses a \helperop{Diff} operation
based on a single heuristic, and this operation is exposed to user-defined
lenses through the \verb|diff| primitive.
For example, given a list \verb|[a,b]| updated to \verb|[a,c,b']|,
the alignment computed by \helperop{Diff}
says that \verb|b| is updated to \verb|c| and that \verb|b'|
is a new element inserted at the end.
However, aligning \verb|b| and \verb|b'|, and treating \verb|c| as an
insertion, may be preferable in a particular setting.
Furthermore, nested differences are not supported by \helperop{Diff}.
For example, if \verb|[x,y,z]| is updated to \verb|[x, ["b",[],[y]], z]|,
alignment fails because the expression which produced \verb|y| is assumed to
be updated with \verb|["b",[],[y]]|.
Instead, that expression should be
updated with \verb|y| and then propagated upwards.
In future work,
it would be useful to integrate alternate alignment mechanisms---and expose
these choices to user-defined lenses---as in the \emph{matching lenses}
framework proposed by \citet{matching-lenses}.

\paragraph{Alternatives to Lens Combinators}

The ``point-free'' combinator-style programming model---which prohibits giving
names to the results of intermediate computations---can pose significant
usability challenges.
One line of recourse is to forgo programming transformations at all, instead
synthesizing them from specifications of the desired input and output data
formats and examples connecting the two; \citet{Miltner:2018} and
\citet{Maina:2018} present such techniques for useful classes of
bidirectional string transformations.
Several other approaches aim to provide programming conveniences for
writing bidirectional transformations.

\emph{Bidirectionalization} aims to automatically
derive backward-functions for programs written with fewer syntactic
restrictions.
\emph{Syntactic bidirectionalization} approaches---which inspect the
syntax of function definitions---have been developed for domain-specific
languages, including a first-order, affine, treeless
language~\citep{Matsuda:2007} and a graph transformation
language~\citep{Hidaka:2010}.
The \emph{semantic bidirectionalization} approach of
\citet{Voigtlander:2009}---which inspects only the types of function
definitions---derives backward-functions for polymorphic functions written
in the general-purpose programming language Haskell.
This approach relies on the fact that polymorphic functions can
discriminate data structures but not the values contained within:
evaluation is instrumented with index information, later used to align
updated values with the originals.
This approach can handle a variety of examples, as long as output changes
preserve the shape of the original output data.

\citet{Matsuda:2015,Matsuda:2018} propose techniques that bring lens
programming closer to an unrestricted style of functional programming.
With \emph{applicative lenses}~\citep{Matsuda:2015}, lenses are lifted into
\emph{lens functions}, which can be manipulated with familiar higher-order
programming constructs.
Although written in a general-purpose functional language, programs must
explicitly manipulate lens functions in an applicative style (rather than as
plain ``unlifted'' functions).
Furthermore, although explicit lambdas---which introduce names (\ie{}~``points'')
for the results of intermediate computations---are allowed, variable uses are
restricted: when duplicating a value (by using a variable more than once),
each copy must be wrapped with a tag that specifies whether or not it is
relevant for subsequent updates.

In \hobit{}~\citep{Matsuda:2018}, bidirectional transformations can be
specified as unlifted functions.
To achieve this, they stage evaluation of a surface expression into a
partially evaluated \emph{residual expression}, which contains no function
applications in evaluation positions.
Given an environment that binds its free variables, a second ``get-evaluator''
reduces the residual expression to a value.
A ``put-evaluator'' pushes an updated value back to the
residual expression, producing an updated environment.
The latter two evaluators form a lens, by viewing the residual expression as a
forward-function and the original environment as its input.
Akin to the aforementioned approach of tagging variable uses, \hobit{} employs
an environment weakening operation to tolerate updates to variables that do
not appear free in the expression being updated---two-way environment merge
plays a similar role in our approach (\cf{} ``\paraMerge{}'' in
\autoref{sec:leo-basic-update}).
Unlike our approach, however, the \hobit{} backward-evaluator is limited to
first-order values, and the resulting changes---to the environment---are not
pushed back to the surface program.
Our approach eliminates the distinction between surface and residual
expressions, so that \emph{all} expression forms---including function
application---in a general-purpose language benefit from bidirectionality.
The result is that both data and code can be smoothly updated within our system.

Two additional aspects of \hobit{} are noteworthy.
One is its treatment of control-flow: each branch of a \verb+case+ expression
is equipped with an \emph{exit condition} and \emph{reconciliation function}
to support ``branch switching'' as in the approach of \citet{lenses}.
The expectation is that, in practice, branching decisions will often differ
between forward- and backward-evaluation.
In contrast, our examples mostly exercise updates which preserve branching
decisions (though \autoref{sec:control-flow-repair} shows how some branch
switching can be supported in our approach).
The second noteworthy aspect is the inclusion of an \verb+appLens+ operation
(similar to our \verb+applyLens+) to allow new primitive lenses to be defined.
For the class of bidirectional transformations that can be programmed in both
\hobit{} and \sns{}, it would be interesting to perform detailed case studies
in future work.

\paragraph{Applications to Documents and \Web{} Applications}
Several authors have employed bidirectional transformations
to develop structured documents and \web{} applications.

\citet{Hu2008} design a programmable editor for tree-structured documents,
where tree transformations are defined in an invertible language
of (injective) functions~\citep{Mu2004a,Mu2004b}.
Several aspects of their work are noteworthy in relation to ours.
One is the presence of a duplication operator.
To restore the equality invariant between copies of duplicated
data, the reverse semantics for this operation gives precedence to the value
of an updated copy; the process fails if copies are updated inconsistently.
Our three-way merge operation similarly gives precedence to updated uses of a variable,
but our default choice is to allow conflicting updates (\cf{} ``\paraMerge{}''
in \autoref{sec:leo-basic-update}).
Second, their support for duplication precludes the round-trip laws of
\citep{lenses}.
Instead, they prove a weaker ``stability'' law that, intuitively, says
that only one update needs to be performed per user edit.
Third, their system tracks \emph{edit tags} which demarcate inserted,
deleted, and updated values.
To further optimize our update algorithm, our user interface could track
edit tags and then push back edit differences only to modified sub-values,
rather than the entire output value (\cf{} ``\paraDiffs{}'' in
\autoref{sec:implementation-update}).

\citet{rajkumar_lenses_2014} define lens combinators in Haskell for
creating HTML \emph{form lenses} that push updated data (including insertions and deletions)
back to the program.
As with the approach of \citet{Matsuda:2015}, this approach requires
programming explicitly with lenses to obtain bidirectionality, and updates
are limited to data.

Within this category of work, the goals of \citet{nakano_consistent_2009}
are closest to ours: to provide a programming system in which both data and
code can be updated through direct manipulation GUIs.
In their system, \vuX{}%
, programs are
written in \biX{}, a bidirectional XQuery-like language for
transforming XML databases into HTML pages~\cite{Liu2007}.
Compared to the language used by \citet{Hu2008}, \biX{} is more
expressive, including support for variable binding, multi-argument
functions, and paths for addressing nodes in an XML document.
Regarding their language,
user-defined functions are limited to the top-level of
the program, and transformations on data structures (\eg{}~\verb+map+)
are primitive.
In contrast, our approach supports a general-purpose, higher-order
language and allows transformations of lists,
records, and user-defined data structures to be customized.
Regarding their user interface, there are two editing modes in \vuX{},
one for updating content and one for editing code.
In the former mode, as in our system, data and style values can be directly
manipulated, triggering synchronization and re-evaluation.
In the latter mode, a separate \emph{code builder} interface allows
the user to interactively change the structure of the program, using
concrete sample data to aid the development process.
Our system does not currently provide any user interface support for
changing the structure of the program; this would be useful to pursue
in future work.
We will also want to expose different editing modes for
different users; this may be useful even without a code builder, as some
constants (\eg{}~the separator string on line 14 of \autoref{fig:overview-initial-program})
may be thought of as code rather than data.
Similar to when using three-way merge in our approach,
their approach does not provide a strong correctness
property, because user updates may influence other parts of the
output due to duplication.
Finally, \vuX{} provides some support for distributed editing and access
control.
Providing such mechanisms in future work is needed to
truly allow a wide range of users---from ``expert programmers''
to ``designers'' to ``end users''---to interact with and modify
the application, as permissions allow, within the same system.

\parahead{Program Repair}

\citet{Monperrus2018} defines automatic repair as ``the transformation of an
unacceptable behavior of a program execution into an acceptable one according to
a specification'' and provides a comprehensive bibliography of static and
dynamic repair techniques.
Most relevant among these techniques are two that repair PHP programs based on
changes to HTML output.

\citet{samimi_automated_2012} present two tools: a first-pass static analysis
tool, \phpQuickFix{}, that repairs individual print statements which produce
malformed HTML; and a second-pass dynamic analysis tool, \phpRepair{}, that
instruments the evaluation of print statements on a given test suite.
Using these print traces, together with the expected HTML output for each input
test, \phpRepair{} generates string constraints where solutions correspond to
the addition, removal, and modification of print statements to satisfy the input-output
behavior of the test suite.
Although limited to modifying constant string arguments to print statements (but
not strings that flow through primitive operations, variables, and function
calls), the \phpRepair{} achieves good results in practice.

The approach of \citet{Wang2012} performs dynamic taint analysis of strings with
finer granularity than \phpRepair{}.
When the user makes a change to the HTML output (for a single run, unlike
multiple runs as in \phpRepair{}), the string trace is used to repair
string constants in the program.
Then they perform a combination of static and dynamic analysis to determine the
impact of the repair on the output.
This analysis may conclude that automated repair is not possible (\eg{} due to
ambiguity, effects on unrelated parts of the output, or because the origins of
the transformed string are not in the source program), in which case the user is
prompted to intervene.
The trace-based approach in earlier versions of \sns{}~\citep{sns-pldi}
is akin to that of \citet{Wang2012}.
In contrast, our approach is also dynamic but employs evaluation update rather
than recording traces.
Currently, our system allows the user to preview changes to the code and output,
but does not attempt to characterize and communicate the overall impact of the
changes.

\parahead{Programming with Direct Manipulation in Prior \sns{}}

\citet{sns-pldi} and \citet{sns-uist} developed a
\emph{direct manipulation programming system}
for generating and manipulating SVG graphic designs.
\citet{sns-uist} propose that graphical user interface features should
be co-designed with program transformations that aim to make
``large,'' structural, and often semantics-changing edits that codify the user
actions.
In future work, it would be useful to develop analogous ``code builders'' for our HTML
setting.

More closely related to our work is the approach of
\citet{sns-pldi}, which allows ``small'' changes
to output values to be reconciled with the program.
Their approach records \emph{value traces} for all numeric values.
When the user updates a number, the corresponding
value-trace equation is immediately solved, applied to the program, and the
new output is rendered---the resulting workflow provides a continuous, ``live''
interaction for equations that can be solved in almost real-time.
When there are multiple solutions, their approach employs simple
heuristics to automatically choose, favoring continuous updates over user
interaction to resolve intent.
The primary technical differences in our
evaluation update algorithm are that:
arbitrary types of values can be changed;
custom update behavior can be defined; and
time overhead (from re-evaluation) is traded to save space overhead (from
recording traces).
The tradeoff between time and space overhead
suggests that a hybrid, demand-driven approach may be worth
investigating, for large programs where both time
and memory are limited resources.

\section{Conclusion}

We presented \emph{bidirectional evaluation}, which allows
arbitrary programs in a general-purpose functional language to be run
``in reverse.''
When the output of a program is changed, bidirectional evaluation
synthesizes program repairs based on differences between the
original and modified output values.
We demonstrated the practicality of our approach by implementing it
within the \snsLeo{} direct manipulation programming system, using it
to develop a variety of HTML documents and applications that can be
interactively edited because of bidirectional evaluation.
We believe these techniques serve as a foundation for a variety of
systems to allow users to combine programming with direct manipulation.

\begin{acks}
This work was supported in part by
Swiss National Science Foundation Early Postdoc.Mobility Fellowship No.~175041,
European Research Council Project ``Implicit Programming'' GA 306484-IMPRO, and
\grantsponsor{GS100000001}{U.S. National Science Foundation}{http://dx.doi.org/10.13039/100000001}
Grant No.~\grantnum{GS100000001}{1651794}.
The authors would like to thank Justin Lubin for significant contributions to the \sns{}
implementation,
and Nate Foster, Anders Miltner, and Benjamin Pierce for comments about related work.

\end{acks}


\clearpage

\appendix

\section{Appendix}
\label{sec:proofs}

The theorems below---\textsc{EvalUpdate} and \textsc{Conservative
UpdateEval}---pertain to the ``base'' bidirectional evaluation
formulation presented in \autoref{sec:leo-basic-update}
(\autoref{fig:eval-update}, \autoref{fig:update-numbers}, and
\autoref{fig:update-cons}).
This system does not support lenses, as presented in
\autoref{sec:leo-custom-update}, because we do not impose any
requirements on user-defined \verb+update+ functions.

\begin{theorem}[EvalUpdate] \label{prop:eval-update}
If $\reducesTo{\varEnv}{\varExp}{\varVal}$,
then $\updatesTo{\varEnv}{\varExp}{\varVal}{\varEnv}{\varExp}$.
\end{theorem}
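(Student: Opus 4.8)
The plan is to prove the statement by structural induction on the derivation of $\reducesTo{\varEnv}{\varExp}{\varVal}$. Every evaluation rule (prefixed \ruleName{E-}) has a matching update rule (prefixed \ruleName{U-}) that pushes a value back through the same syntactic form, so in each case I instantiate the target value of that update rule to the original value $\varVal$ and argue that the premises are discharged in a way that forces every updated component---the environment, the expression, and any intermediate value---to equal its original. The induction hypothesis is applied to the subderivations supplied by the \ruleName{E-} rule.

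First I would dispatch the base cases, which are the replacement axioms and need no induction: \ruleName{U-Const} with $\varConst'=\varConst$ gives $\updatesTo{\varEnv}{\varConst}{\varConst}{\varEnv}{\varConst}$; \ruleName{U-Fun} with target closure $\closure{\varEnv}{\varPat}{\varExp}$ forces $\varEnv'=\varEnv$ and $\varExp'=\varExp$; \ruleName{U-Var} with $\varVal'=\varVal$ rebinds $\varVar$ to the value it already held; and \ruleName{U-Freeze} returns the program verbatim. The inductive cases are the propagation rules. For \ruleName{U-Let} with target $\varVal_2$, the induction hypothesis on the body evaluation $\reducesTo{\envCat{\varEnv}{\envBind{\varVar}{\varVal_1}}}{\varExp_2}{\varVal_2}$ yields $\varExp_2'=\varExp_2$, $\varEnv_2=\varEnv$, and---crucially---$\varVal_1'=\varVal_1$, so the third premise pushes the unchanged $\varVal_1$ back through $\varExp_1$ and the induction hypothesis on $\reducesTo{\varEnv}{\varExp_1}{\varVal_1}$ gives $\varExp_1'=\varExp_1$ and $\varEnv_1=\varEnv$. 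The most intricate case, \ruleName{U-App}, applies the induction hypothesis three times---to the body, the function, and the argument subderivations of \ruleName{E-App}: because updating the body to $\varVal$ returns the unchanged closure $\closure{\varEnv_f}{\varVar}{\varExp_f}$ and unchanged argument $\varVal_2$, the fourth and fifth premises push exactly the originally computed closure and value back through $\varExp_1$ and $\varExp_2$, so their hypotheses again yield identity. \ruleName{U-Cons} and \ruleName{U-If-True}/\ruleName{U-If-False} follow the same template.

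What ties the propagation cases together---and is the one genuine lemma the proof needs---is that the merge operations are reflexive: $\mergeThree{\varEnv}{\varEnv}{\varEnv}=\varEnv$ and $\mergeEnvsTwoWay{\varEnv}{\varEnv}{\varExp_1}{\varExp_2}=\varEnv$, with the analogous facts for value and expression merge. I expect this to be the main (if modest) obstacle, proved by a side induction on the environment that unfolds the two merge definitions: for three-way merge the base case $\mergeVals{\varVal}{\varVal}{\varVal}$ returns $\varVal$ because its third argument does not differ from its first, so the first is chosen; and for two-way merge the first clause applies because $\varVal_1=\varVal_2$. A secondary subtlety is the primitive rules of \autoref{fig:update-numbers}: for $\expPlus{\varExp_1}{\varExp_2}$ a short calculation shows that instantiating the target to the original $\varNum$ makes the sub-goal pushed to the operand being updated coincide with that operand's original value, so the induction hypothesis closes the case (the companion rule is available should one orientation be preferred). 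Since the theorem concerns only the base system, the remaining forms---case expressions, records, and dictionaries---are handled by the same pattern; for list literals one may simply use \ruleName{U-Cons} with \ruleName{U-Const} on $\valNil$, avoiding any appeal to \helperop{Diff}.
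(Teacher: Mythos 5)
Your proposal is correct and takes essentially the same approach as the paper, whose entire proof of this theorem is just ``Straightforward induction on the evaluation derivation.'' The auxiliary fact you rightly isolate---reflexivity of the merge operations, $\mergeThree{\varEnv}{\varEnv}{\varEnv}=\varEnv$ and $\mergeEnvsTwoWay{\varEnv}{\varEnv}{\varExp_1}{\varExp_2}=\varEnv$, with the analogous facts for value and expression merge---is precisely the detail the paper leaves implicit in ``straightforward,'' and your case analysis discharges it correctly.
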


\begin{proof}
Straightforward induction on the evaluation derivation.
\end{proof}

\begin{lemma} \label{prop:merge-equiv}
If $\varEnv' = \mergeEnvsTwoWay{\varEnv_1}{\varEnv_2}{\varExp_1}{\varExp_2}$,
then $\equivEnvs{\varEnv'}{\varEnv_1}{\freeVars{\varExp_1}}$
and  $\equivEnvs{\varEnv'}{\varEnv_2}{\freeVars{\varExp_2}}$.
\end{lemma}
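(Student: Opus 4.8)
The plan is to prove both equivalences simultaneously by induction on the shared structure of $\varEnv_1$ and $\varEnv_2$ (equivalently, on the number of bindings consumed by the recursive definition of two-way merge). The merge is only defined on structurally equivalent environments, so $\varEnv_1$ and $\varEnv_2$ have equal domains and, at each recursive step, expose a common rightmost binding for the same variable $\varVar$; consequently $\varEnv'$ shares that domain as well, and for any variable outside the domain all three environments agree vacuously. Recall that $\equivEnvs{\varEnv'}{\varEnv_1}{\freeVars{\varExp_1}}$ simply asserts $\varEnv'(x) = \varEnv_1(x)$ for every $x \in \freeVars{\varExp_1}$, and similarly for the second conjunct.

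In the base case $\varEnv_1 = \varEnv_2 = \emptyEnv$, so $\varEnv' = \emptyEnv$ and both equivalences hold vacuously. In the inductive case, peeling off the common rightmost binding for $\varVar$ leaves structurally equivalent prefixes whose two-way merge, by the induction hypothesis, already satisfies both equivalences; the full result is $\varEnv' = \envCat{\varEnv_0}{\envBind{\varVar}{\varVal}}$, where $\varEnv_0$ is that merged prefix and $\varVal$ is selected by the three-clause rule from the incoming values $\varVal_1$ (from $\varEnv_1$) and $\varVal_2$ (from $\varEnv_2$). Fixing $x \in \freeVars{\varExp_1}$, if $x \neq \varVar$ then the rightmost binding is irrelevant to the lookup, so $\varEnv'(x)$ equals the lookup in $\varEnv_0$ and $\varEnv_1(x)$ equals the lookup in its prefix; the induction hypothesis then gives $\varEnv_0(x) = \varEnv_1(x)$, and this case closes. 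The second conjunct is handled identically for $x \in \freeVars{\varExp_2}$ with $x \neq \varVar$.

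The only delicate case is $x = \varVar$, where I must show that $\varVal$ coincides with the incoming value from the relevant environment. Since $\varEnv'$ is assumed to exist, the merge did not fail, so at least one of the three clauses fired. For the first conjunct, $\varVar = x \in \freeVars{\varExp_1}$ rules out the third clause (which requires $\varVar \notin \freeVars{\varExp_1}$), so $\varVal$ came from the first clause ($\varVal_1 = \varVal_2$) or the second ($\varVar \notin \freeVars{\varExp_2}$); in either case $\varVal = \varVal_1 = \varEnv_1(\varVar)$. Symmetrically, for the second conjunct $\varVar \in \freeVars{\varExp_2}$ rules out the second clause, forcing $\varVal = \varVal_2 = \varEnv_2(\varVar)$. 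This step is the crux, and it is precisely where the \emph{conservative} nature of two-way merge is used: the definition leaves $\varVal$ undefined exactly when $\varVar$ is free in both expressions and $\varVal_1 \neq \varVal_2$, so the hypothesis that $\varEnv'$ exists is what guarantees that a consistent clause fired; everything else is routine bookkeeping about shadowing and environment lookup. This lemma is exactly what the correctness proof (\autoref{prop:weak-correctness}) needs in order to conclude that each updated subexpression still evaluates under the merged environment as it did under its own.
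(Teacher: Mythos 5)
Your proof is correct and takes essentially the same approach as the paper: the paper dispatches this lemma with the single line ``By definition,'' and your structural induction over the merged environments---with the key observation that when $\varVar \in \freeVars{\varExp_1}$ the clause yielding $\varVal_2$ cannot fire (and symmetrically for $\varExp_2$)---is precisely the careful unfolding of that definition. Nothing is missing; you have simply made explicit the shadowing and clause-selection bookkeeping that the paper leaves implicit.
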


\begin{proof}
By definition.
\end{proof}

\begin{lemma} \label{prop:equiv-eval}
If $\equivEnvs{\varEnv}{\varEnv'}{\freeVars{\varExp}}$
and $\reducesTo{\varEnv}{\varExp}{\varVal_1}$
and $\reducesTo{\varEnv'}{\varExp}{\varVal_2}$,
then $\varVal_1 = \varVal_2$.
\end{lemma}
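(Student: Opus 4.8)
The plan is to prove the lemma by induction on the derivation of $\reducesTo{\varEnv}{\varExp}{\varVal_1}$, with a case analysis on its final rule. Because the evaluation rules are syntax-directed, the form of $\varExp$ determines which rule concludes the second derivation $\reducesTo{\varEnv'}{\varExp}{\varVal_2}$ as well, so in each case I can pair up the corresponding subderivations of the two derivations and appeal to the induction hypothesis on them; the only branching constructs, \ruleName{E-If-True} and \ruleName{E-If-False}, are reconciled below using the hypothesis on the guard. I induct on the derivation rather than on the structure of $\varExp$ because in \ruleName{E-App} the function body $\varExp_f$ comes from the closure produced by the operator and is not a syntactic subterm of the application; the subderivation that evaluates it is nonetheless strictly smaller, so the induction hypothesis still applies to it.

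The leaf cases are immediate: for \ruleName{E-Const} both derivations return the same constant, and for \ruleName{E-Var} with $\varExp = \varVar$ the sole free variable lies in $\freeVars{\varExp}$, so $\equivEnvs{\varEnv}{\varEnv'}{\freeVars{\varExp}}$ forces $\varVar$ to resolve identically, giving $\varVal_1 = \varVal_2$. The composite cases (\ruleName{E-Cons}, \ruleName{E-Freeze}, the primitive operators, etc.) follow a uniform pattern: each subexpression $\varExp_i$ evaluated by a premise satisfies $\freeVars{\varExp_i} \subseteq \freeVars{\varExp}$, so $\equivEnvs{\varEnv}{\varEnv'}{\freeVars{\varExp_i}}$ holds and the induction hypothesis forces the subresults to agree. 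The binding construct \ruleName{E-Let} needs one extra observation: after the hypothesis gives a common value $\varVal_1'$ for $\varExp_1$ (using $\freeVars{\varExp_1} \subseteq \freeVars{\varExp}$), I extend both environments with $\envBind{\varVar}{\varVal_1'}$; since $\freeVars{\varExp_2} \setminus \set{\varVar} \subseteq \freeVars{\varExp}$ and both extensions bind $\varVar$ to the same $\varVal_1'$, the extended environments are equivalent on $\freeVars{\varExp_2}$, so the hypothesis applies to the body. For \ruleName{E-If-True}/\ruleName{E-If-False}, the hypothesis on the guard shows it evaluates to the same boolean under $\varEnv$ and $\varEnv'$, so the second derivation selects the same branch, and a final appeal to the hypothesis on that branch yields $\varVal_1 = \varVal_2$.

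The case demanding the most care---and the main obstacle---is \ruleName{E-Fun} together with its use in \ruleName{E-App}, because closures capture their entire defining environment: here $\varVal_1 = \closure{\varEnv}{\varPat}{\varExp_0}$ and $\varVal_2 = \closure{\varEnv'}{\varPat}{\varExp_0}$, which need not be syntactically identical when $\varEnv \neq \varEnv'$. The equality asserted by the lemma is therefore the natural value equivalence defined mutually with $\equivEnvs{\cdot}{\cdot}{\cdot}$, under which two closures are equal exactly when they carry the same function and their environments are equivalent on the free variables of that function's body. With this reading the \ruleName{E-Fun} case reduces to $\equivEnvs{\varEnv}{\varEnv'}{\freeVars{\expFun{\varPat}{\varExp_0}}}$, which is immediate since $\freeVars{\expFun{\varPat}{\varExp_0}} = \freeVars{\varExp}$, and it supplies exactly what \ruleName{E-App} needs: the hypothesis on the operator yields equal closures $\closure{\varEnv_f}{\varVar}{\varExp_f}$ and $\closure{\varEnv_f'}{\varVar}{\varExp_f}$ with $\equivEnvs{\varEnv_f}{\varEnv_f'}{\freeVars{\expFun{\varVar}{\varExp_f}}}$, the hypothesis on the operand yields a common argument value $\varVal_a$, and extending with $\envBind{\varVar}{\varVal_a}$ preserves equivalence on $\freeVars{\varExp_f}$ (as $\freeVars{\varExp_f} \subseteq \freeVars{\expFun{\varVar}{\varExp_f}} \cup \set{\varVar}$), so the hypothesis on the body subderivation closes the case. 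The mutual definition of value and environment equivalence is well founded for this argument because the induction on the evaluation derivation provides precisely the closure equivalence required at each step, so no unbounded unfolding of the definition is ever needed.
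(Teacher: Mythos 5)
Your proof follows the same route as the paper's---the paper's entire proof is ``Straightforward induction on the evaluation derivation,'' and that is exactly your skeleton---but your handling of closures is a substantive refinement that the paper simply elides, and it deserves comment. Your observation about \ruleName{E-Fun} is genuine: because closures capture the \emph{whole} defining environment, the lemma is false under literal syntactic equality of values (take $\varExp = \expFun{\ttx}{\tty}$ with $\varEnv$ and $\varEnv'$ agreeing on $\tty$ but differing on some $\ttz \notin \freeVars{\varExp}$; the two derivations produce $\closure{\varEnv}{\ttx}{\tty}$ and $\closure{\varEnv'}{\ttx}{\tty}$, which are distinct values). Your repair---reading ``$=$'' as a value equivalence defined mutually with $\equivEnvs{\cdot}{\cdot}{\cdot}$, under which closures are equivalent iff they share pattern and body and their environments are equivalent on the free variables of the function---is the standard one, and your \ruleName{E-App} case then closes correctly (one nit: the two operand evaluations yield values that are only \emph{equivalent}, not ``a common argument value $\varVal_a$,'' e.g.\ when the argument is itself a function; your environment-equivalence notion absorbs this, but the phrasing suggests identity). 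Two consequences are worth flagging explicitly. First, as you note, the hypothesis $\equivEnvs{\varEnv}{\varEnv'}{\freeVars{\varExp}}$ must itself be read with the coarser equivalence or the induction does not close. Second, the weakening ripples into the lemma's client: in the paper's proof of Theorem \textsc{Conservative UpdateEval}, the steps ``$\reducesTo{\varEnv'}{\varExp_1'}{\varVal_1'}$'' obtained from this lemma, and hence the theorem's conclusion $\reducesTo{\varEnv'}{\varExp'}{\varVal'}$, then hold only up to the same equivalence---unless one instead restores literal equality by having \ruleName{E-Fun} capture only $\freeVars{\expFun{\varPat}{\varExp}}$ in the closure. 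Either repair is legitimate; the paper's one-line proof acknowledges neither, so your version is, if anything, the more honest account.
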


\begin{proof}
Straightforward induction on the evaluation derivation.
\end{proof}

\begin{theorem}[Conservative UpdateEval] \label{prop:weak-correctness}
If $\updatesToPessimistic{\varEnv}{\varExp}{\varVal'}{\varEnv'}{\varExp'}$,
then $\reducesTo{\varEnv'}{\varExp'}{\varVal'}$.
\end{theorem}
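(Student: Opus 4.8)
The plan is to proceed by \emph{structural induction on the derivation of the conservative update judgement} $\updatesToPessimistic{\varEnv}{\varExp}{\varVal'}{\varEnv'}{\varExp'}$; in every case I must show that the rebuilt program re-evaluates to the requested value, i.e. $\reducesTo{\varEnv'}{\varExp'}{\varVal'}$. The axioms are immediate. For \ruleName{U-Const} and \ruleName{U-Var} the new program is $(\varEnv,\varConst')$, respectively an environment that rebinds $\varVar$ to $\varVal'$, and \ruleName{E-Const} (resp.\ \ruleName{E-Var}) returns the target; \ruleName{U-Fun} rebuilds precisely the closure supplied as the target, which \ruleName{E-Fun} reproduces; and \ruleName{U-Freeze} leaves both environment and expression untouched while its premise forces the target to equal the forward value, so \ruleName{E-Freeze} returns it.

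The heart of the argument is the family of \emph{propagation rules that merge two updated environments} --- \ruleName{U-Let}, \ruleName{U-App}, \ruleName{U-Cons}, \ruleName{U-If-True}, \ruleName{U-And-True}, and the $\diffUpdate{\cdot}$ case of the list-difference judgement. I will spell out \ruleName{U-Let} and claim the others are structurally identical. Here the induction hypothesis on the two subderivations gives $\reducesTo{\varEnv_1}{\varExp_1'}{\varVal_1'}$ and $\reducesTo{\envCat{\varEnv_2}{\envBind{\varVar}{\varVal_1'}}}{\varExp_2'}{\varVal_2'}$, while the conclusion rebuilds $(\varEnv',\expLet{\varVar}{\varExp_1'}{\varExp_2'})$ with $\varEnv'=\mergeEnvsTwoWay{\varEnv_1}{\varEnv_2}{\varExp_1}{\varExp_2}$. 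To run \ruleName{E-Let} in $\varEnv'$ I transport each subevaluation into the merged environment: Lemma~\ref{prop:merge-equiv} gives $\equivEnvs{\varEnv'}{\varEnv_1}{\freeVars{\varExp_1}}$ and $\equivEnvs{\varEnv'}{\varEnv_2}{\freeVars{\varExp_2}}$, and Lemma~\ref{prop:equiv-eval} then shows that $\varExp_1'$, and $\varExp_2'$ under the common extension $\envBind{\varVar}{\varVal_1'}$, evaluate to the same values in $\varEnv'$ as in $\varEnv_1$ and $\varEnv_2$. Chaining the two evaluations through \ruleName{E-Let} yields $\varVal_2'$.

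Bridging Lemma~\ref{prop:merge-equiv} (stated over the \emph{original} subexpressions $\varExp_i$) to the evaluations of the \emph{updated} subexpressions $\varExp_i'$ needs one supporting invariant, which I would establish by a parallel induction: update never introduces new free variables and preserves the structure, hence the domain, of environments, so $\freeVars{\varExp_i'}\subseteq\freeVars{\varExp_i}$. With this, agreement on $\freeVars{\varExp_i}$ upgrades to agreement on $\freeVars{\varExp_i'}$, exactly as Lemma~\ref{prop:equiv-eval} demands. The primitive operator rules (\ruleName{U-Plus-1}, \ruleName{U-Plus-2}, \ruleName{U-Lt}, \ruleName{U-And-False-1}) are each discharged by a short local computation combining the induction hypothesis on the rewritten operand with Lemma~\ref{prop:equiv-eval} on the untouched one; since the framework is agnostic to the choice of primitive rules, overall soundness is modular in their local soundness. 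The \ruleName{U-List} case delegates to an auxiliary induction on the $\helperop{Diff}$-driven judgement $\foldListDiffEquals{\varEnv}{\cdot}{\listDiff}{\varEnv'}{\varExp'}$, whose $\diffKeep$, $\diffDelete$, and $\diffInsert{\cdot}$ cases preserve evaluation directly (the inserted $\coerceToExp{\varVal'}$ being a closed expression that evaluates back to $\varVal'$) and whose $\diffUpdate{\cdot}$ case reuses the merge argument above.

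I expect the \emph{merge reconciliation} to be the main obstacle, and to be the reason the theorem holds \emph{only} for the two-way variant. Conservativity is precisely what makes Lemma~\ref{prop:merge-equiv} true: a variable free in both $\varExp_1$ and $\varExp_2$ is forced to receive the \emph{same} update in $\varEnv_1$ and $\varEnv_2$, since otherwise no defining clause of $\mergeEnvsTwoWay{\cdot}{\cdot}{\cdot}{\cdot}$ applies and the judgement does not derive. Hence the merged environment can agree with both branches on their respective free variables simultaneously. Under three-way merge this guarantee fails --- the branches may disagree on a shared variable --- and re-evaluation need no longer land on $\varVal'$, which is exactly the control-flow and duplicated-variable behaviour \autoref{sec:leo-basic-update} describes. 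The rest is bookkeeping: carrying the structural-equivalence and free-variable invariant uniformly across every rule so that Lemmas~\ref{prop:merge-equiv} and~\ref{prop:equiv-eval} always apply.
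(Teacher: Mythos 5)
Your proposal is correct and follows essentially the same route as the paper's own proof: induction on the conservative update derivation, using Lemma~\ref{prop:merge-equiv} and Lemma~\ref{prop:equiv-eval} to transport the subderivations' evaluations into the merged environment, with \ruleName{U-Let} as the representative merge case and the same explanation of why the two-way merge is indispensable. The one place you go beyond the paper's sketch is the explicit invariant $\freeVars{\varExp'}\subseteq\freeVars{\varExp}$ (update introduces no new free variables), which the paper silently relies on when it applies Lemma~\ref{prop:equiv-eval} to the \emph{updated} subexpressions while the merge lemma speaks only of the \emph{original} ones; making that bridging step explicit strengthens, rather than departs from, the argument.
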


\begin{proof}

By induction on the update derivation.
As discussed in \autoref{sec:leo-basic-update}, there are two challenges.
The first is when there are variable conflicts (not all uses of a variable have
been updated consistently).
This situation may arise in any rule that updates multiple
subexpressions; in each case, the conservative two-way environment
merge validates the conditions required for correctness.
Below, we show the \ruleName{U-Let} case as a representative.

\begin{center}
%
%
{\small
$\inferrule*[right=\ruleNameFig{U-Let}]
  {
  {
  \startAlignedPremises{}
   \reducesToAligned
     {\varEnv}
     {\varExp_1}
     {\varVal_1}
   \\
   \updatesToAlignedPessimistic
    {\envCat{\varEnv}{\envBind{\varVar}{\varVal_1}}}
    {\varExp_2}
    {\varVal_2'}
    {\envCat{\varEnv_2}{\envBind{\varVar}{\varVal_1'}}}
    {\varExp_2'}
  \stopAlignedPremises{}
  }
  \sepPremise
  {
  \startAlignedPremises{}
   \updatesToAlignedPessimistic
    {\varEnv}
    {\varExp_1}
    {\varVal_1'}
    {\varEnv_1}
    {\varExp_1'}
   \\
   \equalsAligned{\varEnv'}{\mergeEnvsTwoWay{\varEnv_1}{\varEnv_2}{\varExp_1}{\varExp_2}}
   \phantom{\overset{\pessimisticMark}{\rightsquigarrow}}
  \stopAlignedPremises{}
  }
  }
  {\updatesToPessimistic
    {\varEnv}
    {\expLet{\varVar}{\varExp_1}{\varExp_2}}
    {\varVal_2'}
    {\varEnv'}
    {\expLet{\varVar}{\varExp_1'}{\varExp_2'}}
  }
$
}
\end{center}

By the induction hypothesis,
$\reducesTo{\varEnv_1}{\varExp_1'}{\varVal_1'}$ and
$\reducesTo{(\envCat{\varEnv_2}{\envBind{\varVar}{\varVal_1'}})}{\varExp_2'}{\varVal_2'}$

By Lemma~\ref{prop:merge-equiv},
$\equivEnvs{\varEnv'}{\varEnv_1}{\freeVars{\varExp_1}}$ and
$\equivEnvs{\varEnv'}{\varEnv_2}{\freeVars{\varExp_2}}$.

By Lemma~\ref{prop:equiv-eval},
$\reducesTo{\varEnv'}{\varExp_1'}{\varVal_1'}$.

Because $\equivEnvs{\varEnv'}{\varEnv_2}{\freeVars{\varExp_2}}$,
by definition,
$\equivEnvs{(\envCat{\varEnv'}{\envBind{\varVar}{\varVal_1'}})}
           {(\envCat{\varEnv_2}{\envBind{\varVar}{\varVal_1'}})}
           {\freeVars{\varExp_2}}$.

By Lemma~\ref{prop:equiv-eval},
$\reducesTo{(\envCat{\varEnv'}{\envBind{\varVar}{\varVal_1'}})}{\varExp_2'}{\varVal_2'}$.

By \ruleName{E-Let},
$\reducesTo{\varEnv'}{\expLet{\varVar}{\varExp_1'}{\varExp_2'}}{\varVal_2'}$,
which is the goal.
\\

The second challenge is that if-expressions may not take the same branch.
Below, we show the \ruleName{U-If-True} case.
Again, two-way merge performs the additional checks required to guarantee
that the updated program evaluates to the new value.

\begin{center}
%
%
{\small
$\inferrule*[right=\ruleNameFig{U-If-True}]
  {
  {
  \startAlignedPremises{}
     \reducesToAligned{\varEnv}{\varExp_1}{\expTrue} \\
     \updatesToAlignedPessimistic{\varEnv}{\varExp_2}{\varVal'}{\varEnv_2}{\varExp_2'}
  \stopAlignedPremises{}
  }
  \\
  \varEnv' = \mergeEnvsTwoWay{\varEnv}{\varEnv_2}{\varExp_1}{\varExp_2}
  }
  {\updatesToPessimistic
    {\varEnv}
    {\expIte{\varExp_1}{\varExp_2}{\varExp_3}}
    {\varVal'}
    {\varEnv'}
    {\expIte{\varExp_1}{\varExp_2'}{\varExp_3}}
  }
$
}
\end{center}

By the induction hypothesis,
$\reducesTo{\varEnv_2}{\varExp_2'}{\varVal_2'}$.

By Lemma~\ref{prop:merge-equiv},
$\equivEnvs{\varEnv'}{\varEnv}{\freeVars{\varExp_1}}$.

By Lemma~\ref{prop:equiv-eval},
$\reducesTo{\varEnv'}{\varExp_1}{\expTrue}$.

By \ruleName{E-If-True},
$\reducesTo{\varEnv'}{\expIte{\varExp_1}{\varExp_2'}{\varExp_3}}{\varVal_2'}$,
which is the goal.

\end{proof}

\end{document}